\documentclass[11pt]{article}

\usepackage[utf8]{inputenc}

\usepackage[a4paper,margin=1in]{geometry}
\usepackage{amsmath,amssymb,amsthm,bm,mathrsfs,mathtools}
\usepackage{booktabs}
\usepackage{graphicx}
\usepackage{subcaption}
\usepackage{float}
\usepackage{hyperref}
\usepackage{setspace}
\usepackage{enumitem}

\hypersetup{colorlinks=true,linkcolor=blue,citecolor=blue,urlcolor=blue}

\newcommand{\safeincludegraphics}[2][]{%
  \IfFileExists{#2}{%
    \includegraphics[#1]{#2}%
  }{%
    \fbox{%
      \begin{minipage}[c][0.28\textheight][c]{0.90\linewidth}
      \centering
      Figure file not found.\\[0.5em]
      \footnotesize The source path is listed in the \LaTeX{} file.
      \end{minipage}%
    }%
  }%
}

\newtheorem{theorem}{Theorem}
\newtheorem{lemma}{Lemma}
\newtheorem{proposition}{Proposition}

\newtheorem{assumption}{Assumption}
\theoremstyle{remark}
\newtheorem{remark}{Remark}

\newcommand{\E}{\mathbb{E}}
\newcommand{\Var}{\mathrm{Var}}
\newcommand{\Cov}{\mathrm{Cov}}
\newcommand{\Normal}{\mathcal{N}}
\newcommand{\toP}{\xrightarrow{p}}
\newcommand{\toD}{\xrightarrow{d}}

\newcommand{\se}{\mathrm{se}}
\newcommand{\Prb}{\mathbb{P}}

\title{When Screening Misleads: A Robust Mendelian Randomization Test for Reliable Causal Discovery}
\author{Litao Jia and Bo Chen$^{*}$ \\
School of Statistics and Data Science, LPMC and KLMDASR, \\
Nankai University, Tianjin, China\\
*Corresponding author: bochen@nankai.edu.cn}
\date{}

\begin{document}
\maketitle

\begin{abstract}
Mendelian Randomization (MR) has been widely used as a standard approach for identifying causal effects in biomedical research. When the true exposure and outcome variables are unknown among many potential candidates, a common but problematic practice is to first screen for associations and then evaluate causal effects only among exposure–outcome pairs that show significant correlations. We demonstrate that the classical MR ratio estimator suffers from severe type I error inflation under this selection procedure. To address this issue, we propose a novel robust MR test that remains valid regardless of the prior association screening step when there is no true causal effect. We show that the proposed test consistently maintains the correct type I error rate, independent of the association test results. Furthermore, our method can incorporate summary statistics from previous association studies to improve the power of causal effect detection. Extensive simulation studies illustrate the advantages of the proposed method compared with the classical MR approach.
\end{abstract}

\noindent \textbf{Keywords:} Mendelian randomization; selective inference; post-screening inference; conditional type I error; instrumental variables.

\section{Introduction}\label{sec:introduction}

Understanding whether an observed association reflects a causal relationship is a central goal in biomedical research. Conventional observational analyses can identify correlations between exposures and disease outcomes, but such associations may arise from confounding, reverse causation, measurement error, or other sources of bias. Mendelian randomization (MR) has become a widely used approach for strengthening causal inference in this setting by using genetic variants as instrumental variables for modifiable exposures \cite{DaveySmithEbrahim2003,DaveySmithHemani2014}. Because germline genetic variants are assigned before disease onset and are often less susceptible to environmental confounding, MR provides a principled strategy for evaluating whether an exposure--outcome association is likely to have a causal interpretation.

A common scientific use of MR is to examine whether a previously reported observational association is causal. For example, MR analyses have been used to evaluate whether circulating C-reactive protein is causally related to coronary heart disease \cite{Wensley2011CRP}, whether high-density lipoprotein cholesterol has a causal protective effect on myocardial infarction \cite{Voight2012HDL}, and whether different blood lipid fractions have causal effects on coronary heart disease \cite{Holmes2015Lipids}. Similar association-to-causality questions have motivated MR studies of body mass index, blood pressure, vitamin D, smoking, alcohol consumption, sleep traits, and many other epidemiological exposures. These examples illustrate a typical workflow: an exposure and an outcome are first found to be associated, and MR is then applied to assess whether the association is consistent with a causal effect.

The availability of genome-wide association study (GWAS) summary statistics has further expanded the scope of MR. Summary-data MR and two-sample MR make it possible to estimate causal effects by combining genetic associations with the exposure and the outcome obtained from independent studies or large consortia \cite{Burgess2013SummarizedData}. Platforms such as MR-Base have enabled systematic causal inference across large collections of phenotypes and outcomes \cite{Hemani2018MRBase}. This development has encouraged increasingly exploratory and high-throughput analyses, including phenome-wide MR studies. For instance, MR-PheWAS designs have been used to scan the potential causal effects of genetically determined vitamin D across hundreds of UK Biobank outcomes \cite{Meng2019VitaminD}. More broadly, large-scale analyses based on GWAS summary data have been used to investigate causal relationships between many risk factors and many common diseases \cite{Zhu2018CommonDiseases}.

These developments make MR a powerful tool for causal discovery, but they also create a statistical challenge that has received comparatively little attention. In exploratory studies, the exposure and outcome of interest may not be specified in advance. Instead, investigators may first screen many candidate exposure--outcome pairs for observational association and then perform MR only for pairs that pass the initial screening step. Such a workflow is scientifically natural: it reflects the common practice of using MR to follow up promising associations. However, it changes the inferential target. The MR test is no longer applied to a randomly chosen exposure--outcome pair; it is applied conditionally on the event that the pair has already shown evidence of association.

This conditioning matters because the observational association statistic and the MR statistic are computed from overlapping variables. Under confounding, the statistic used to screen the exposure--outcome association can be correlated with the subsequent MR estimator. As a result, conditioning on a significant association can distort the null distribution of the MR statistic, even when the marginal MR test is correctly calibrated. The relevant error probability is therefore not the usual unconditional type I error, but the conditional type I error given the screening event. Standard MR procedures, including the Wald ratio and inverse-variance weighted estimators, are not designed to account for this post-screening conditional distribution.

Existing methodological work in MR has addressed several important sources of invalid inference, including weak instruments \cite{BurgessThompson2011WeakInstruments}, invalid instruments and horizontal pleiotropy \cite{Bowden2015MREgger}, and the use of multiple genetic variants with summarized data \cite{Burgess2013SummarizedData}. These developments are essential for robust MR analysis, but they target different problems from the one considered here. Weak-instrument methods address bias and size distortion caused by weak genetic associations with the exposure. Pleiotropy-robust methods address violations of the exclusion restriction. In contrast, the problem studied in this paper arises even when the MR model is correctly specified: the distortion is induced by applying the MR test only after an association-screening step.

In this paper, we study the post-screening behavior of MR tests in a simple structural model that includes genetic instruments, an exposure, an outcome, and an unobserved confounder. We show that the conventional MR statistic can have severe conditional type I error inflation after selecting exposure--outcome pairs based on their observational association. The inflation is not caused by an incorrect marginal null distribution. Instead, it arises because the selected distribution of the MR statistic differs from its unconditional distribution. We derive the joint asymptotic distribution of the association estimator and the MR estimator, characterize the selected null distribution, and identify the covariance term responsible for the distortion.

Motivated by this analysis, we propose a decorrelated MR statistic for valid post-screening causal testing. The key idea is to remove from the MR estimator the component that is first-order correlated with the observational association statistic. Under the Gaussian first-order approximation, this decorrelation makes the corrected statistic asymptotically independent of the association-screening event. We further show how external association information, such as summary statistics from an independent study, can be incorporated to construct a feasible test. The resulting procedure maintains correct type I error conditional on the screening step and can gain power when the external association estimate is sufficiently precise.

The remainder of the paper is organized as follows. Section~\ref{sec:problem} demonstrates the post-screening type I error inflation of the conventional MR statistic and derives the selected null distribution. Section~\ref{sec:joint} establishes the joint asymptotic distribution of the observational association estimator and the MR estimator. Section~\ref{sec:decorrelated} introduces the proposed decorrelated statistics and proves their post-selection validity. Section~\ref{sec:comparison} discusses the relationship between the proposed covariance components and conventional MR variance formulas. Section~\ref{sec:discussion} concludes with implications for reliable causal discovery in screened MR workflows.

\section{The Problem When Screening}\label{sec:problem}

\subsection{Screened MR workflow and conditional inferential target}\label{subsec:screened_workflow}

Let \(\hat\beta_{XY}\) denote the OLS slope from regressing \(Y\) on \(X\) with an intercept. This statistic summarizes the observational exposure--outcome association. In a pre-specified MR analysis, the exposure--outcome pair is fixed before looking at the data, and the causal null hypothesis is tested without conditioning on an earlier association result. In exploratory association-to-causality analyses, however, the second-stage MR analysis is often performed only after the same exposure--outcome pair has passed an observational association screen. The association screen is therefore not an ancillary descriptive step; it determines which causal hypotheses are subsequently tested.

Let
\[
H_0^\gamma:\gamma=\gamma_0
\]
be the causal null hypothesis, and let \(\hat\theta_n\) denote the conventional MR estimator used in the second stage. In this paper \(\hat\theta_n\) is the Wald ratio in the single-SNP case and the IVW estimator in the fixed-\(K\) multi-SNP case. A two-sided association screen can be written as
\begin{equation}
A_n
=
\left\{
\frac{|\hat\beta_{XY}|}{\widehat{\se}(\hat\beta_{XY})}
>
z_{1-\alpha_{\mathrm{scr}}/2}
\right\},
\label{eq:screening_event_tstat}
\end{equation}
where \(\alpha_{\mathrm{scr}}\) is the screening level. Under the first-order approximation used below, this event is equivalently represented as a thresholding event for the standardized association statistic. The causal test is then applied only on the selected sample space \(A_n\).

This changes the relevant error probability. For a two-sided MR test at nominal level \(a\), the usual marginal type I error is
\begin{equation}
\alpha_{\mathrm{unc}}(a)
=
\Prb_{H_0^\gamma}\{\text{MR test rejects }H_0^\gamma\},
\label{eq:unconditional_type1_target}
\end{equation}
whereas the error probability experienced by the screened workflow is
\begin{equation}
\alpha_{\mathrm{sel}}(a)
=
\Prb_{H_0^\gamma}\{\text{MR test rejects }H_0^\gamma\mid A_n\}.
\label{eq:selected_type1_target}
\end{equation}
This is a selective, or post-selection, error probability: the causal null distribution is evaluated after a data-dependent decision has been made to carry out the MR analysis \cite{Freedman1983Screening,FithianSunTaylor2014,TaylorTibshirani2015,LeeSunSunTaylor2016}. The association null and the causal null are different. Even when \(\gamma=\gamma_0=0\), the population observational association \(\beta_{XY}\) can be nonzero because of confounding, so the screened set may contain exposure--outcome pairs that are strongly associated but not causally related.

The distinction between \eqref{eq:unconditional_type1_target} and \eqref{eq:selected_type1_target} matters because the screening statistic and the MR statistic are computed from overlapping variables. The statistic \(\hat\beta_{XY}\) uses \((X,Y)\), and the Wald or IVW estimator also uses the same exposure and outcome through SNP--exposure and SNP--outcome associations. Under confounding, the first-order fluctuations of \(\hat\beta_{XY}\) and \(\hat\theta_n\) are generally correlated. Conditioning on \(A_n\) can therefore change the distribution of \(\hat\theta_n\) among the hypotheses that are actually tested.

\subsection{Bivariate Gaussian geometry of selection}\label{subsec:gaussian_geometry}

The mechanism can be seen most clearly through a two-dimensional Gaussian approximation. The approximation used in this subsection is not an assumption that the structural variables are normally distributed. It is the first-order limiting experiment established in Section~\ref{sec:joint}: under finite fourth moments and relevance conditions,
\begin{equation}
\sqrt n
\begin{pmatrix}
\hat\beta_{XY}-\beta_{XY}\\
\hat\theta_n-\gamma
\end{pmatrix}
\approx
\Normal\left(
\begin{pmatrix}0\\0\end{pmatrix},
\begin{pmatrix}
\omega_{11} & \omega_{12}\\
\omega_{12} & \omega_{22}
\end{pmatrix}
\right).
\label{eq:section2_joint_gaussian}
\end{equation}
The proof of this approximation is based on a multivariate central limit theorem for sample covariance moments and a delta-method expansion of the covariance-ratio estimators; no normality of \(G\), \(U\), \(\varepsilon_1\), or \(\varepsilon_2\) is required.

Under \(H_0^\gamma:\gamma=\gamma_0\), define the standardized association and MR coordinates
\begin{equation}
T_A=\frac{\sqrt n\,\hat\beta_{XY}}{\sqrt{\omega_{11}}}=Z_1+\mu_n,
\qquad
T_M=\frac{\sqrt n(\hat\theta_n-\gamma_0)}{\sqrt{\omega_{22}}}=Z_2,
\label{eq:TA_TM_definition}
\end{equation}
where
\begin{equation}
\mu_n=\frac{\sqrt n\,\beta_{XY}}{\sqrt{\omega_{11}}},
\qquad
\begin{pmatrix}Z_1\\Z_2\end{pmatrix}
\approx
\Normal\left(
\begin{pmatrix}0\\0\end{pmatrix},
\begin{pmatrix}1&\rho\\ \rho&1\end{pmatrix}
\right),
\qquad
\rho=\frac{\omega_{12}}{\sqrt{\omega_{11}\omega_{22}}}.
\label{eq:rho_geometry}
\end{equation}
Thus, under the causal null, the center of the limiting distribution of \((T_A,T_M)\) is \((\mu_n,0)\). The vertical coordinate is centered at zero because the causal null is true; the horizontal coordinate may be shifted away from zero because the observational exposure--outcome association may be generated by confounding.

The bivariate normal density is supported on the whole plane, but its high-density contours are ellipses. In the standardized coordinates above, the parameter \(\rho\) determines the tilt and eccentricity of these contours. When \(\rho=0\), the two coordinates are independent in the limiting Gaussian experiment; in standardized coordinates the contours are circular. In the original unstandardized coordinates, absence of first-order correlation would give axis-aligned ellipses rather than necessarily circles. The important statistical property is independence under the bivariate Gaussian limit, not the visual circular shape itself.

A two-sided association screen and a two-sided causal rejection event are
\begin{equation}
A=\{|T_A|>c\},
\qquad
B=\{|T_M|>q\},
\qquad q=z_{1-a/2}.
\label{eq:A_B_geometry}
\end{equation}
Without screening, the null rejection probability is \(\Prb(B)=a\). After screening, the relevant probability is
\begin{equation}
\Prb(B\mid A)
=
\frac{\Prb(|T_M|>q,\ |T_A|>c)}{\Prb(|T_A|>c)}.
\label{eq:geometry_conditional_probability}
\end{equation}
Geometrically, the vertical lines \(T_A=\pm c\) restrict the second-stage causal test to the two selected tails of the Gaussian cloud, and the horizontal lines \(T_M=\pm q\) mark the MR rejection region. When \(\rho\ne0\), the tilted Gaussian contours make the selected tails contain a different distribution of \(T_M\) than the full cloud. The conditional rejection probability in \eqref{eq:geometry_conditional_probability} can therefore differ from \(a\).

\begin{figure}[htbp]
\centering
\safeincludegraphics[width=0.98\textwidth]{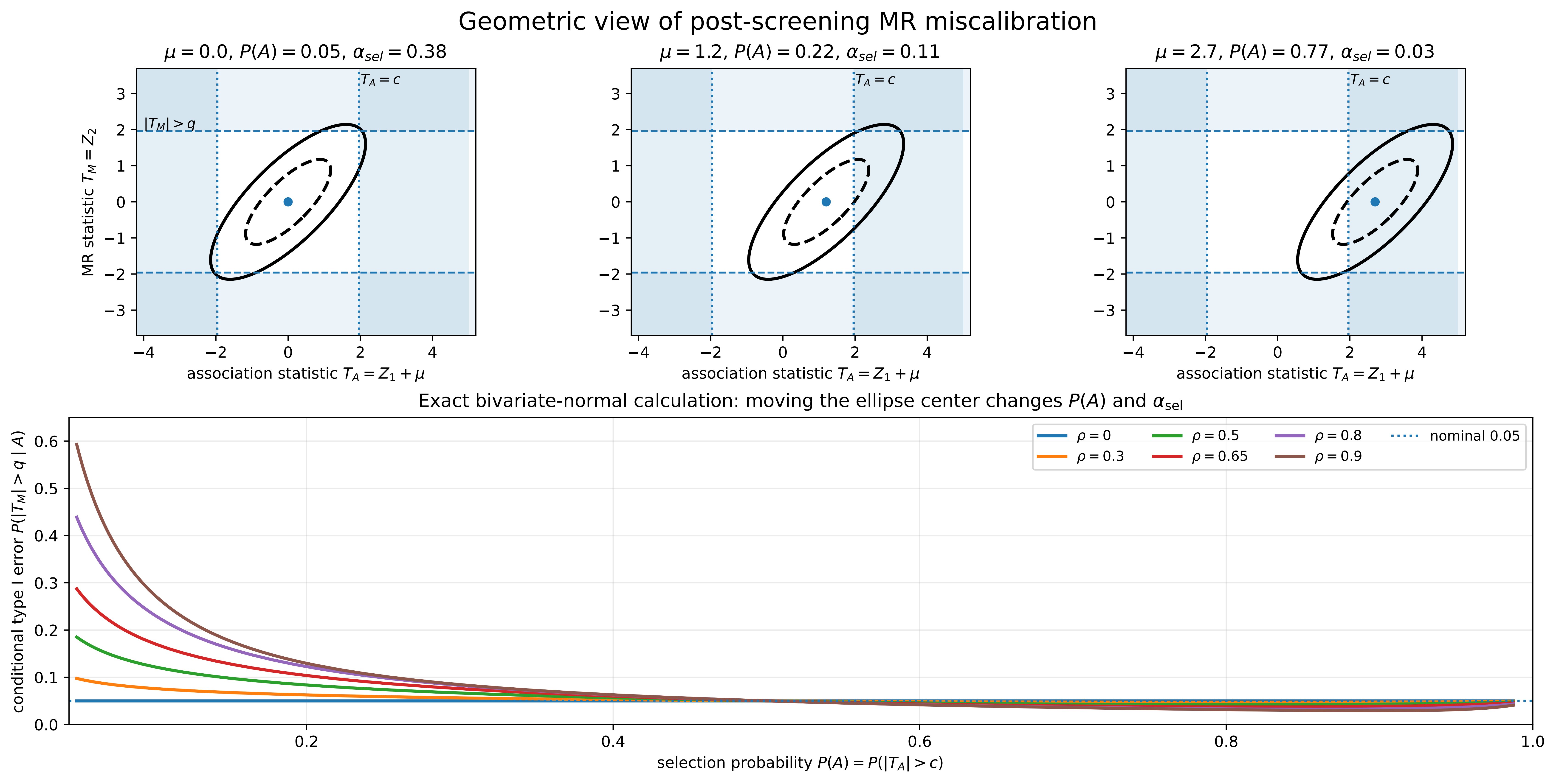}
\caption{Geometric view of post-screening MR miscalibration under the bivariate Gaussian first-order approximation. The top panels show density contours of \((T_A,T_M)\) under \(H_0^\gamma\). The vertical dotted lines represent the association screen \(T_A=\pm c\), and the horizontal dashed lines represent the MR rejection threshold \(T_M=\pm q\). The displayed ellipses are density contours, not boundaries of a uniform distribution. The bottom panel plots the Gaussian probability \(\Prb(|T_M|>q\mid |T_A|>c)\) against the selection probability \(\Prb(|T_A|>c)\) as the horizontal center \(\mu_n\) is varied while \(\rho\) is fixed. The figure is an idealized diagnostic: in the simulations below, varying the confounding strength \(\lambda\) changes the selection rate and can also change the covariance parameters. The purpose of the figure is to show the mechanism by which selective conditioning changes the null distribution, not to reproduce the simulation curves point by point.}
\label{fig:screening_ellipse_visualization}
\end{figure}

The conditional type I error in the Gaussian experiment can be written compactly. For any fixed \(z\), define the selection weight
\begin{equation}
s_{\mu,c,\rho}(z)
=
\Prb(|T_A|>c\mid T_M=z)
=
\Phi\left(\frac{-c-\mu-\rho z}{\sqrt{1-\rho^2}}\right)
+
\Phi\left(\frac{\mu+\rho z-c}{\sqrt{1-\rho^2}}\right),
\label{eq:selection_weight_compact}
\end{equation}
and the marginal selection probability
\begin{equation}
p_{\mu,c}
=
\Prb(|T_A|>c)
=
\Phi(-c-\mu)+\Phi(\mu-c).
\label{eq:selection_probability_compact}
\end{equation}
Then the selected density of \(T_M=Z_2\) is
\begin{equation}
f_{Z_2\mid A}(z)
=
\phi(z)\frac{s_{\mu_n,c,\rho}(z)}{p_{\mu_n,c}},
\label{eq:selected_density_compact}
\end{equation}
and the selected type I error of the uncorrected level-\(a\) MR test is
\begin{equation}
\alpha_{\mathrm{sel}}(a)
=
\int_{|z|>z_{1-a/2}}
\phi(z)\frac{s_{\mu_n,c,\rho}(z)}{p_{\mu_n,c}}\,dz.
\label{eq:selected_alpha_compact}
\end{equation}
Appendix~\ref{app:selected-density} supplements this subsection by deriving \eqref{eq:selected_density_compact} and \eqref{eq:selected_alpha_compact} from the conditional-density identity and the bivariate normal conditioning formula.

Equations~\eqref{eq:selected_density_compact}--\eqref{eq:selected_alpha_compact} also clarify the special case in which screening is harmless. If \(\rho=0\), then \(s_{\mu,c,0}(z)=p_{\mu,c}\) for all \(z\), so \(f_{Z_2\mid A}(z)=\phi(z)\) and \(\alpha_{\mathrm{sel}}(a)=a\). If \(\rho\ne0\), the selection weight varies with \(z\), and the selected density is tilted relative to the marginal standard normal density. The amount of distortion depends on the screening threshold \(c\), the screening strength \(\mu_n\), and the first-order correlation \(\rho\). The ellipse in Figure~\ref{fig:screening_ellipse_visualization} is therefore only a visual representation of Gaussian density contours; the type I error is computed from Gaussian probability mass, not from Euclidean area inside a finite ellipse.

\subsection{Diagnostic simulation}\label{subsec:screening_simulation}

We next demonstrate the same phenomenon numerically under a correctly specified structural MR model,
\begin{equation}
X=\sum_{j=1}^K\alpha_jG_j+U+\varepsilon_1,
\qquad
Y=\gamma X+\lambda U+\varepsilon_2.
\label{eq:model_for_problem}
\end{equation}
Here \(U\) is an unobserved confounder, \(G_1,\ldots,G_K\) are valid genetic instruments, and \(\gamma\) is the causal effect. To isolate type I error, we set \(\gamma=0\). Thus any observational exposure--outcome association is generated by the confounding term \(\lambda U\), not by a causal effect of \(X\) on \(Y\).

The displayed simulations use independent Rademacher instruments \(G_j\in\{-1,1\}\) with \(\Prb(G_j=1)=1/2\), equal first-stage effects \(\alpha_j\equiv1\), and independent centered Gaussian errors. The Gaussian error choice is made for a transparent diagnostic design; it is not required by the asymptotic theory in Section~\ref{sec:joint}. We consider \(K=1\) and \(K=10\), and set
\[
\sigma_U^2=\sigma_{\varepsilon_1}^2=\sigma_{\varepsilon_2}^2
\in\{0.5,1,2\}.
\]
For each parameter setting, the main-sample size is \(n=10{,}000\), the number of Monte Carlo replications is \(R=10{,}000\), the screening test is the two-sided association test for \(\hat\beta_{XY}\) at level \(0.05\), and the second-stage MR test is also carried out at level \(0.05\). The conventional second-stage test uses the package-style Wald statistic for \(K=1\) and the package-style fixed-effect IVW statistic for \(K=10\), without any correction for the screening step. The confounding coefficient \(\lambda\) is varied over a dense grid so that the empirical selection rate ranges from approximately the nominal screening level to almost one.

For each point on the grid, the empirical selection rate is
\[
\widehat p_{\mathrm{sel}}
=
\frac1R\sum_{r=1}^R I(A_n^{(r)}),
\]
the unconditional type I error is
\[
\widehat\alpha_{\mathrm{unc}}
=
\frac1R\sum_{r=1}^R I\{\text{MR rejects in replication }r\},
\]
and the conditional type I error is
\[
\widehat\alpha_{\mathrm{sel}}
=
\frac{\sum_{r=1}^R I\{\text{MR rejects in replication }r\}I(A_n^{(r)})}
{\sum_{r=1}^R I(A_n^{(r)})}.
\]
Plotting the rejection probabilities against \(\widehat p_{\mathrm{sel}}\), rather than against \(\lambda\) itself, puts the results on the scale most relevant for post-screening inference: how informative the selection event is. Unlike the idealized geometry in Figure~\ref{fig:screening_ellipse_visualization}, changing \(\lambda\) changes more than the horizontal location of the Gaussian cloud; it changes the observational association and can also change the covariance entries in the joint approximation. The comparison is therefore qualitative rather than pointwise.

Figure~\ref{fig:conditional_t1e} shows the conditional type I error. The conventional MR statistic can reject far more often than the nominal level after screening. The inflation is most pronounced when the selection event is rare or moderately rare. In this region, the event \(A_n\) is driven by extreme random fluctuations of the observational association statistic; because those fluctuations are correlated with the MR statistic, the selected distribution of the MR statistic is shifted. As \(\lambda\) increases further, selection becomes almost deterministic and conditioning on \(A_n\) carries less information about the random fluctuation of \(\hat\theta_n\); correspondingly, the conditional rejection rate moves back toward the nominal level.

\begin{figure}[htbp]
\centering
\safeincludegraphics[width=0.98\textwidth]{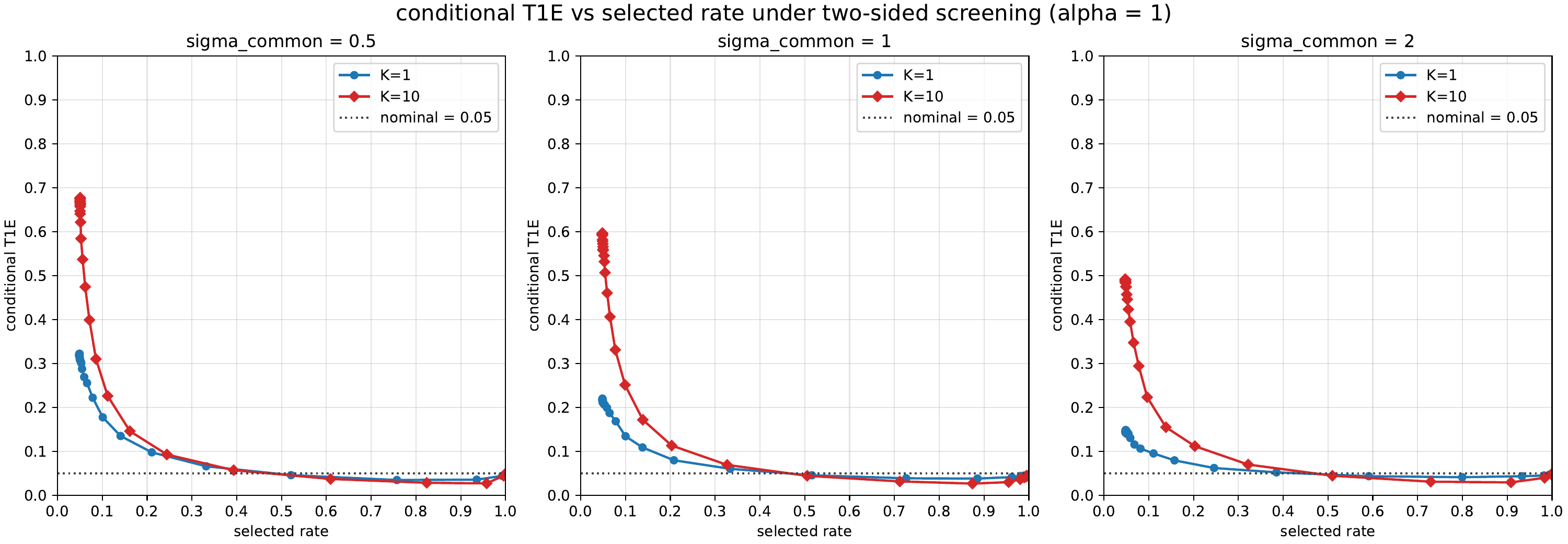}
\caption{Conditional type I error versus empirical selection rate under two-sided observational association screening. The panels correspond to \(\sigma_U^2=\sigma_{\varepsilon_1}^2=\sigma_{\varepsilon_2}^2\in\{0.5,1,2\}\). The nominal level of the second-stage MR test is \(0.05\), shown by the horizontal dotted line. The conventional Wald or IVW statistic can have severe conditional type I error inflation after association screening, even though the instruments are valid and \(\gamma=0\).}
\label{fig:conditional_t1e}
\end{figure}

Figure~\ref{fig:unconditional_t1e} reports the corresponding unconditional type I error under the same simulation settings. The unconditional rejection probability remains close to \(0.05\) across the grid. This contrast is the key diagnostic. The conventional MR statistic is not failing because its marginal null distribution is grossly misspecified in this simulation. It is failing because the statistic is used conditionally on a data-dependent association screen. Therefore a method that only checks marginal calibration cannot diagnose the problem; the relevant target is the conditional distribution induced by \(A_n\).

\begin{figure}[htbp]
\centering
\safeincludegraphics[width=0.98\textwidth]{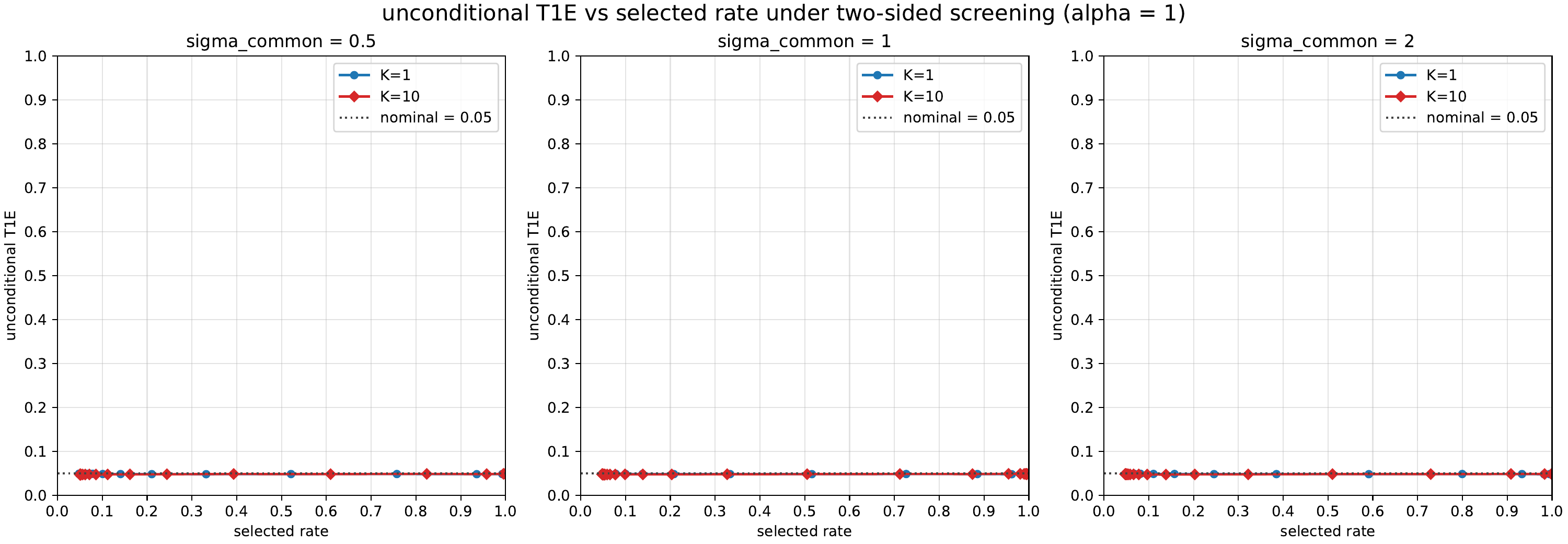}
\caption{Unconditional type I error versus empirical selection rate under the same settings as Figure~\ref{fig:conditional_t1e}. The unconditional rejection probability stays close to the nominal level, showing that the distortion in Figure~\ref{fig:conditional_t1e} is induced by conditioning on the preliminary association screen.}
\label{fig:unconditional_t1e}
\end{figure}

The calculation and simulations suggest the form of a solution. Rather than recalibrating the rejection threshold separately for each possible screening rule, we construct in Section~\ref{sec:decorrelated} a new statistic whose first-order covariance with \(\hat\beta_{XY}\) is zero. Under the Gaussian first-order approximation, this makes the corrected statistic asymptotically independent of the association-screening event, so conditioning on \(A_n\) no longer changes its limiting null distribution.

\section{Joint Asymptotic Foundation}\label{sec:joint}

The selected-distribution argument in Section~\ref{sec:problem} rests on a bivariate Gaussian approximation for the observational association estimator and the MR estimator. This section establishes that approximation and identifies the covariance entries used later for decorrelation. The normality used here is not a distributional assumption on the structural variables. In particular, the genetic variables, the confounder, and the structural errors \(\varepsilon_1\) and \(\varepsilon_2\) need not be normally distributed. The Gaussian law arises as the large-sample limit of estimators that are smooth functions of empirical covariance moments.

The proof strategy is standard. Centered sample covariances are asymptotically equivalent to averages of i.i.d. product variables; the multivariate central limit theorem gives a Gaussian limit for these product averages; and the delta method gives the limit of the covariance-ratio maps defining \(\hat\beta_{XY}\), the Wald ratio, and the IVW estimator \cite{Serfling1980Approximation,VanDerVaart1998,NeweyMcFadden1994,Hampel1974Influence}. Appendix~\ref{app:joint-asymptotics} is the proof appendix for Lemma~\ref{lem:joint-linearization} and Theorems~\ref{thm:single}--\ref{thm:multi}; it gives the sample-moment expansion and the delta-method calculations in detail.

\subsection{Structural model and estimators}\label{subsec:joint_model}

Let \(O_i=(X_i,Y_i,G_{1i},\ldots,G_{Ki})\), \(i=1,\ldots,n\), be i.i.d. observations generated from the fixed-\(K\) structural model
\begin{equation}
X=\sum_{j=1}^K\alpha_jG_j+U+\varepsilon_1,
\qquad
Y=\gamma X+\lambda U+\varepsilon_2.
\label{eq:main_model}
\end{equation}
The single-SNP case is obtained by taking \(K=1\). For readability, the main text uses the zero-mean normalization
\begin{equation}
\E(G_j)=\E(U)=\E(\varepsilon_1)=\E(\varepsilon_2)=0,
\qquad j=1,\ldots,K.
\label{eq:zero_mean_normalization}
\end{equation}
This normalization is not a substantive restriction. In the nonzero-mean case, all formulas are obtained by replacing each variable by its centered version; the sample estimators below are already based on centered sample covariances.

\begin{assumption}[Moment, independence, and relevance conditions]\label{ass:joint}
The following conditions hold throughout Section~\ref{sec:joint}.
\begin{enumerate}[label=(A\arabic*),leftmargin=2.6em]
    \item The observations are i.i.d. copies of \((X,Y,G_1,\ldots,G_K)\), and the number of SNPs \(K\) is fixed as \(n\to\infty\).
    \item The vector \((G_1,\ldots,G_K)\) is independent of \((U,\varepsilon_1,\varepsilon_2)\), and \(U\), \(\varepsilon_1\), and \(\varepsilon_2\) are mutually independent. No normality assumption is imposed. We assume
    \[
    \E\|(G_1,\ldots,G_K)\|^4+
    \E|U|^4+
    \E|\varepsilon_1|^4+
    \E|\varepsilon_2|^4<\infty.
    \]
    \item The SNPs are pairwise uncorrelated, \(\Cov(G_j,G_\ell)=0\) for \(j\ne\ell\), with \(0<\sigma_{G_j}^2=\Var(G_j)<\infty\). This corresponds to the uncorrelated-variant setting of standard fixed-effect IVW estimators; linkage disequilibrium would require replacing the scalar sums below by covariance-matrix expressions.
    \item The exposure variance \(V_X=\Var(X)\) is positive, and the instruments are relevant. In the single-SNP case, \(\Cov(G,X)=\alpha\sigma_G^2\ne0\). In the multi-SNP case, for the limiting IVW weights defined below,
    \[
    \sum_{j=1}^K\pi_j\alpha_j^2>0.
    \]
\end{enumerate}
\end{assumption}

For any two observed variables \(A\) and \(B\), define the centered sample covariance
\begin{equation}
S_{AB}=\frac1n\sum_{i=1}^n(A_i-\bar A)(B_i-\bar B).
\label{eq:sample_covariance}
\end{equation}
The observational association estimator is
\begin{equation}
\hat\beta_{XY}=\frac{S_{XY}}{S_{XX}},
\qquad
\beta_{XY}=\frac{\Cov(X,Y)}{\Var(X)}.
\label{eq:beta_xy_main}
\end{equation}

In the single-SNP case, the causal estimator is the Wald ratio
\begin{equation}
\hat\beta_{MR}=\frac{S_{GY}}{S_{GX}}.
\label{eq:wald_main}
\end{equation}
In the multi-SNP case, define the marginal SNP--exposure and SNP--outcome slopes
\[
\hat\beta_{Xj}=\frac{S_{G_jX}}{S_{G_jG_j}},
\qquad
\hat\beta_{Yj}=\frac{S_{G_jY}}{S_{G_jG_j}},
\qquad j=1,\ldots,K,
\]
and consider the fixed-\(K\) IVW estimator
\begin{equation}
\hat\beta_{IVW}
=
\frac{\sum_{j=1}^K\pi_{j,n}\hat\beta_{Xj}\hat\beta_{Yj}}
{\sum_{j=1}^K\pi_{j,n}\hat\beta_{Xj}^2},
\qquad
\pi_{j,n}\toP\pi_j\in(0,\infty).
\label{eq:ivw_main}
\end{equation}
This is the usual IVW point-estimation form for uncorrelated variants when the weights are chosen as inverse variances of the SNP--outcome associations \cite{Burgess2013SummarizedData}. For the first-order expansion, only the probability limits \(\pi_j\) are needed. Random fluctuations of \(\pi_{j,n}\) do not enter the first-order limit because, under the valid-instrument model, the derivative of the IVW functional with respect to the weights vanishes at the population target \(\gamma\).

\subsection{Asymptotically linear representation}\label{subsec:joint_linearization}

The following lemma records the influence functions needed for the joint limit and for the feasible covariance estimator in Section~\ref{sec:decorrelated}. It avoids a long list of intermediate sample moments in the main text. All variables in the displayed influence functions are population variables under the zero-mean normalization \eqref{eq:zero_mean_normalization}; in the general case they should be replaced by their centered versions.

\begin{lemma}[Asymptotically linear representation]\label{lem:joint-linearization}
Under Assumption~\ref{ass:joint},
\begin{equation}
\sqrt n
\begin{pmatrix}
\hat\beta_{XY}-\beta_{XY}\\
\hat\theta_n-\gamma
\end{pmatrix}
=
\frac1{\sqrt n}\sum_{i=1}^n
\begin{pmatrix}
\phi_{1i}\\
\phi_{2i}
\end{pmatrix}
+o_p(1),
\label{eq:joint_linearization_compact}
\end{equation}
where \(\hat\theta_n=\hat\beta_{MR}\) in the single-SNP case and \(\hat\theta_n=\hat\beta_{IVW}\) in the multi-SNP case. The first component is
\begin{equation}
\phi_1=\frac{X(Y-\beta_{XY}X)}{V_X},
\qquad V_X=\Var(X).
\label{eq:phi1_compact}
\end{equation}
For the single-SNP Wald ratio,
\begin{equation}
\phi_2=\frac{G(Y-\gamma X)}{\Cov(G,X)}
=\frac{G(Y-\gamma X)}{\alpha\sigma_G^2}.
\label{eq:phi2_single_compact}
\end{equation}
For the fixed-\(K\) IVW estimator,
\begin{equation}
\phi_2=
\left(
\sum_{j=1}^K
\frac{\pi_j\alpha_jG_j/\sigma_{G_j}^2}
{\sum_{\ell=1}^K\pi_\ell\alpha_\ell^2}
\right)(Y-\gamma X).
\label{eq:phi2_multi_compact}
\end{equation}
Consequently,
\begin{equation}
\Omega
=
\Var
\begin{pmatrix}
\phi_1\\
\phi_2
\end{pmatrix}
=
\begin{pmatrix}
\omega_{11} & \omega_{12}\\
\omega_{12} & \omega_{22}
\end{pmatrix}
\label{eq:Omega_from_phi_compact}
\end{equation}
is the joint first-order covariance matrix of \((\hat\beta_{XY},\hat\theta_n)\).
\end{lemma}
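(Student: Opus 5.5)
The argument is a standard moment expansion followed by the delta method, carried out separately for each estimator and then stacked.

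\emph{Step 1: linearize the sample covariances.} For any pair of observed variables $A,B$ with finite second moments of the product, which holds here by (A2), write
\[
S_{AB}=\frac1n\sum_{i=1}^n A_iB_i-\bar A\bar B .
\]
Since $\bar A\bar B=O_p(n^{-1})$, we get $\sqrt n\,(S_{AB}-\Cov(A,B))=n^{-1/2}\sum_i\{A_iB_i-\E(AB)\}+o_p(1)$. This applies under the zero-mean normalization \eqref{eq:zero_mean_normalization}, and to centered versions in general.

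The fourth-moment condition in (A2) gives finite variances for all products among $X$, $Y$ and the $G_j$. The multivariate CLT therefore applies jointly to the finite vector of products $(XY,XX,G_jX,G_jY,G_jG_j)_{j\le K}$.

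\emph{Step 2: delta method for $\hat\beta_{XY}$.} Expanding the ratio $S_{XY}/S_{XX}$ around $(\Cov(X,Y),V_X)$ gives
\[
\hat\beta_{XY}-\beta_{XY}=\frac{(S_{XY}-\Cov(X,Y))-\beta_{XY}(S_{XX}-V_X)}{V_X}+o_p(n^{-1/2}).
\]
By Step 1 the numerator has summands $X_iY_i-\beta_{XY}X_i^2-\{\Cov(X,Y)-\beta_{XY}V_X\}$, and the bracketed constant vanishes. This yields $\phi_1=X(Y-\beta_{XY}X)/V_X$, which has mean zero.

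\emph{Step 3: delta method for the Wald ratio.} Relevance (A4) gives $\Cov(G,X)=\alpha\sigma_G^2\neq0$. The model gives $\Cov(G,Y)=\gamma\Cov(G,X)$, since $G$ is independent of $U$ and $\varepsilon_2$. The same ratio expansion then gives
\[
\hat\beta_{MR}-\gamma=\frac{S_{GY}-\gamma S_{GX}}{\Cov(G,X)}+o_p(n^{-1/2}),
\]
with summand $G(Y-\gamma X)$. That summand has mean zero because $Y-\gamma X=\lambda U+\varepsilon_2$ is independent of $G$. This is \eqref{eq:phi2_single_compact}.

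\emph{Step 4: the IVW estimator (the main obstacle).} Here three sources of randomness must be controlled at once: the estimated weights $\pi_{j,n}$, the denominators $S_{G_jG_j}$, and the product structure $\hat\beta_{Xj}\hat\beta_{Yj}$. I would write
\[
\hat\beta_{IVW}-\gamma=\frac{\sum_j\pi_{j,n}\hat\beta_{Xj}(\hat\beta_{Yj}-\gamma\hat\beta_{Xj})}{\sum_j\pi_{j,n}\hat\beta_{Xj}^2}.
\]
The key observation is that $\hat\beta_{Yj}-\gamma\hat\beta_{Xj}=S_{G_j,Y-\gamma X}/S_{G_jG_j}$. Its population value is $\Cov(G_j,\lambda U+\varepsilon_2)/\sigma_{G_j}^2=0$, so this factor is already $O_p(n^{-1/2})$. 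Consequently every other random factor may be replaced by its probability limit at cost $o_p(n^{-1/2})$, by Slutsky. Specifically:
\begin{itemize}
\item $\pi_{j,n}\to\pi_j$;
\item $\hat\beta_{Xj}\to\Cov(G_j,X)/\sigma_{G_j}^2=\alpha_j$, using that the SNPs are uncorrelated by (A3);
\item $S_{G_jG_j}\to\sigma_{G_j}^2$;
\item the denominator converges to $\sum_\ell\pi_\ell\alpha_\ell^2>0$ by (A4).
\end{itemize}
This is also why weight fluctuations do not enter the limit. What remains is $\sum_j\pi_j\alpha_j\,\sigma_{G_j}^{-2}\,n^{-1}\sum_iG_{ji}(Y_i-\gamma X_i)$ divided by $\sum_\ell\pi_\ell\alpha_\ell^2$, plus the centering correction from Step 1. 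This gives \eqref{eq:phi2_multi_compact}.

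\emph{Step 5: joint limit.} Stacking Steps 2–4 gives \eqref{eq:joint_linearization_compact} with i.i.d., mean-zero summands $(\phi_{1i},\phi_{2i})$ of finite variance, again by the fourth moments. The bivariate CLT then identifies $\Omega=\Var(\phi_1,\phi_2)$ as the joint first-order covariance matrix.
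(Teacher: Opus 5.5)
Your proposal is correct and follows essentially the same route as the paper's proof. Both linearize the centered sample covariances and apply the ratio delta method to $\hat\beta_{XY}$ and the Wald ratio. Both handle IVW through the identity $\hat\beta_{Yj}-\gamma\hat\beta_{Xj}=S_{G_j,Y-\gamma X}/S_{G_jG_j}=O_p(n^{-1/2})$, which lets the weights, first-stage slopes and denominators be replaced by their probability limits via Slutsky, before stacking and invoking the multivariate CLT.
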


The proof of Lemma~\ref{lem:joint-linearization} is given in Appendix~\ref{app:joint-asymptotics}. The only probabilistic input is the multivariate central limit theorem for a finite vector of product moments such as \(XY\), \(X^2\), \(G_jY\), and \(G_jX\). The finite-fourth-moment condition in Assumption~\ref{ass:joint} ensures that these product moments have finite variances. Thus the lemma remains valid under non-Gaussian structural errors.

\subsection{Joint limit for the Wald ratio}\label{subsec:single_joint}

\begin{theorem}[Joint limit for the Wald ratio]\label{thm:single}
Assume the single-SNP model
\[
X=\alpha G+U+\varepsilon_1,
\qquad
Y=\gamma X+\lambda U+\varepsilon_2,
\]
and suppose Assumption~\ref{ass:joint} holds with \(K=1\). Then
\begin{equation}
\sqrt n
\begin{pmatrix}
\hat\beta_{XY}-\beta_{XY}\\
\hat\beta_{MR}-\gamma
\end{pmatrix}
\toD
\Normal\left(
\begin{pmatrix}0\\0\end{pmatrix},
\Omega^{S}
\right),
\qquad
\Omega^{S}=
\begin{pmatrix}
\omega_{11} & \omega_{12}\\
\omega_{12} & \omega_{22}^{S}
\end{pmatrix},
\label{eq:single_joint_limit}
\end{equation}
where
\begin{align}
\omega_{11}
&=
\frac{\E\{X^2(Y-\beta_{XY}X)^2\}}{V_X^2},
\label{eq:omega11_single}\\
\omega_{12}
&=
\frac{\lambda^2\sigma_U^2+\sigma_{\varepsilon_2}^2}{V_X}
-
\frac{2\lambda^2\sigma_U^4}{V_X^2},
\label{eq:omega12_single}\\
\omega_{22}^{S}
&=
\frac{\lambda^2\sigma_U^2+\sigma_{\varepsilon_2}^2}{\alpha^2\sigma_G^2}.
\label{eq:omega22_single}
\end{align}
\end{theorem}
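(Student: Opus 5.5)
The plan is to read the theorem off Lemma~\ref{lem:joint-linearization} with \(K=1\) and then compute the three covariance entries by hand. By the lemma, \(\sqrt n(\hat\beta_{XY}-\beta_{XY},\hat\beta_{MR}-\gamma)^\top\) equals \(n^{-1/2}\sum_i(\phi_{1i},\phi_{2i})^\top+o_p(1)\), with \(\phi_1=X(Y-\beta_{XY}X)/V_X\) and \(\phi_2=G(Y-\gamma X)/(\alpha\sigma_G^2)\).

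\emph{Step 1: CLT and Slutsky.} The pairs \((\phi_{1i},\phi_{2i})\) are i.i.d. By Assumption~\ref{ass:joint}(A2), all products of at most four structural variables are integrable, so \(\phi_1\) and \(\phi_2\) have finite second moments. Both have mean zero:
\begin{itemize}
\item \(\E\{X(Y-\beta_{XY}X)\}=0\) by the definition of \(\beta_{XY}\);
\item \(\E\{G(Y-\gamma X)\}=\E\{G(\lambda U+\varepsilon_2)\}=0\) by independence of \(G\) from \((U,\varepsilon_2)\).
\end{itemize}
The multivariate CLT then gives a \(\Normal(0,\Omega^S)\) limit for the normalized sum, where \(\Omega^S=\Var(\phi_1,\phi_2)\). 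Slutsky's lemma absorbs the \(o_p(1)\) remainder. It remains to identify the entries of \(\Omega^S\).

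\emph{Step 2: reduce to structural quantities.} Write \(e=Y-\gamma X=\lambda U+\varepsilon_2\), so that \(\E e^2=\lambda^2\sigma_U^2+\sigma_{\varepsilon_2}^2\). We have \(\Cov(X,Y)=\gamma V_X+\lambda\sigma_U^2\), hence \(\beta_{XY}=\gamma+c\) with \(c=\lambda\sigma_U^2/V_X\), and therefore \(Y-\beta_{XY}X=e-cX\).
\begin{itemize}
\item \(\omega_{11}\) is just \(\E\phi_1^2\), which is the stated expression; nothing further is needed.
\item For \(\omega_{22}^S\): since \(G\perp e\), \(\E\phi_2^2=\E(G^2)\E(e^2)/(\alpha^2\sigma_G^4)\), which gives \eqref{eq:omega22_single}.
\end{itemize}

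\emph{Step 3: the cross term (main computation).} We need
\[
\omega_{12}=\frac{\E\{XG\,e\,(e-cX)\}}{V_X\,\alpha\sigma_G^2}
=\frac{\E(XGe^2)-c\,\E(X^2Ge)}{V_X\,\alpha\sigma_G^2}.
\]
Substitute \(X=\alpha G+U+\varepsilon_1\) and use that \(G\) is independent of \((U,\varepsilon_1,\varepsilon_2)\) with \(\E G=0\).
\begin{itemize}
\item First term: \(\E(XGe^2)=\alpha\sigma_G^2\,\E e^2\). The contribution \(\E\{G(U+\varepsilon_1)e^2\}\) vanishes because \(\E G=0\) factors out.
\item Second term: expand \(X^2\). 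The \(\alpha^2G^3e\) piece vanishes since \(\E e=0\). The \((U+\varepsilon_1)^2Ge\) piece vanishes since \(\E G=0\). The cross piece contributes \(2\alpha\sigma_G^2\E\{(U+\varepsilon_1)e\}=2\alpha\sigma_G^2\lambda\sigma_U^2\), using mutual independence of \(U,\varepsilon_1,\varepsilon_2\).
\end{itemize}
Dividing by \(V_X\alpha\sigma_G^2\) and substituting \(c=\lambda\sigma_U^2/V_X\) yields \eqref{eq:omega12_single}.

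The argument is routine once Lemma~\ref{lem:joint-linearization} is in hand. The only delicate point is the bookkeeping in Step 3. Odd moments of \(G\) such as \(\E G^3\) appear, and they need not vanish for non-Gaussian or non-symmetric instruments. The key check is that every such term is multiplied by a mean-zero factor independent of \(G\), so no symmetry assumption on \(G\) is needed. This is why the final formula is free of \(\E G^3\) and of the fourth moments of \(U\) and \(\varepsilon_1\).
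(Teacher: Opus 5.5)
Your proposal is correct and takes the same route as the paper's proof in Appendix~\ref{app:single-proof}. Both invoke Lemma~\ref{lem:joint-linearization} with the multivariate CLT, get \(\omega_{22}^S\) from independence of \(G\) and \(Y-\gamma X\), and compute \(\omega_{12}\) by expanding \(\E(XGe^2)\) and \(\E(X^2Ge)\), where the \(\E(G^3)\) and \(\E(G)\) terms vanish because each multiplies a mean-zero independent factor; the only difference is notation (your \(e\) and \(c\) are the paper's \(\delta\) and \(\kappa\)).
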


The proof of Theorem~\ref{thm:single} is given in Appendix~\ref{app:joint-asymptotics}, Subsection~\ref{app:single-proof}. The theorem should be read with two points in mind. First, the bivariate normal law in \eqref{eq:single_joint_limit} is an estimator limit, not a normal structural model. Second, \(\omega_{11}\) is intentionally left in the robust form \eqref{eq:omega11_single}. Without additional fourth-moment restrictions, \(\E\{X^2(Y-\beta_{XY}X)^2\}\) generally cannot be reduced to a formula involving only second moments. Section~\ref{sec:comparison} and Appendix~\ref{app:heteroskedasticity} discuss why replacing this robust variance target by a classical homoskedastic variance can change the decorrelation coefficient.

\subsection{Joint limit for \texorpdfstring{fixed-\(K\)}{fixed-K} IVW}\label{subsec:multi_joint}

\begin{theorem}[Joint limit for fixed-\(K\) IVW]\label{thm:multi}
Assume model~\eqref{eq:main_model} and Assumption~\ref{ass:joint}. If \(K\) is fixed and \(\pi_{j,n}\toP\pi_j\in(0,\infty)\), then
\begin{equation}
\sqrt n
\begin{pmatrix}
\hat\beta_{XY}-\beta_{XY}\\
\hat\beta_{IVW}-\gamma
\end{pmatrix}
\toD
\Normal\left(
\begin{pmatrix}0\\0\end{pmatrix},
\Omega^{M}
\right),
\qquad
\Omega^{M}=
\begin{pmatrix}
\omega_{11} & \omega_{12}\\
\omega_{12} & \omega_{22}^{M}
\end{pmatrix},
\label{eq:multi_joint_limit}
\end{equation}
where
\begin{align}
\omega_{11}
&=
\frac{\E\{X^2(Y-\beta_{XY}X)^2\}}{V_X^2},
\label{eq:omega11_multi}\\
\omega_{12}
&=
\frac{\lambda^2\sigma_U^2+\sigma_{\varepsilon_2}^2}{V_X}
-
\frac{2\lambda^2\sigma_U^4}{V_X^2},
\label{eq:omega12_multi}\\
\omega_{22}^{M}
&=
(\lambda^2\sigma_U^2+\sigma_{\varepsilon_2}^2)
\frac{\sum_{j=1}^K \pi_j^2\alpha_j^2/\sigma_{G_j}^2}
{\left(\sum_{j=1}^K\pi_j\alpha_j^2\right)^2}.
\label{eq:omega22_multi}
\end{align}
\end{theorem}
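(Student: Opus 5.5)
We take Lemma~\ref{lem:joint-linearization} as the starting point, with \(\hat\theta_n=\hat\beta_{IVW}\). The lemma gives the asymptotically linear representation with influence functions \(\phi_1=X(Y-\beta_{XY}X)/V_X\) and
\[
\phi_2=W(Y-\gamma X),\qquad W=\sum_{j=1}^K\frac{\pi_j\alpha_jG_j/\sigma_{G_j}^2}{\sum_{\ell}\pi_\ell\alpha_\ell^2}.
\]
Both \(\phi_1\) and \(\phi_2\) are mean zero: \(\E\{X(Y-\beta_{XY}X)\}=0\) by the definition of \(\beta_{XY}\), and \(\E\{W(Y-\gamma X)\}=0\) because \(Y-\gamma X=\lambda U+\varepsilon_2\) is independent of \(G\) by (A2). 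Both have finite variance by the fourth-moment condition in (A2) and the Cauchy--Schwarz inequality. The multivariate CLT applied to the i.i.d. vectors \((\phi_{1i},\phi_{2i})\), together with Slutsky's lemma for the \(o_p(1)\) remainder, then gives the bivariate normal limit with covariance \(\Omega^M=\Var(\phi_1,\phi_2)\). The rest of the proof computes the three entries of this matrix.

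\textbf{Diagonal entries.} The entry \(\omega_{11}=\E\{X^2(Y-\beta_{XY}X)^2\}/V_X^2\) is immediate from the mean-zero property and is left in robust form, exactly as in Theorem~\ref{thm:single}. For \(\omega_{22}^M\), write \(R=\lambda U+\varepsilon_2\). Since \(G\perp R\), we have \(\E(W^2R^2)=\E(W^2)\E(R^2)\), and \(\E(R^2)=\lambda^2\sigma_U^2+\sigma_{\varepsilon_2}^2\). By (A3), the cross terms in \(\E(W^2)\) vanish, leaving
\[
\E(W^2)=\frac{\sum_j\pi_j^2\alpha_j^2/\sigma_{G_j}^2}{\left(\sum_j\pi_j\alpha_j^2\right)^2}.
\]
This yields \eqref{eq:omega22_multi}.

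\textbf{Off-diagonal entry (main step).} We need \(\omega_{12}=\E\{W X(Y-\beta_{XY}X)R\}/V_X\). Write \(X=\alpha^\top G+U+\varepsilon_1\), and note that \(Y-\beta_{XY}X=(\gamma-\beta_{XY})X+R\). We also need \(\beta_{XY}-\gamma=\lambda\sigma_U^2/V_X\). Substituting, we expand \(\E[WXR\{(\gamma-\beta_{XY})X+R\}]\) into two expectations.

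\begin{enumerate}[label=(\roman*)]
\item \emph{The term \(\E(WXR^2)\).} Since \(R\) is independent of \(G\), only the \(\alpha^\top G\) part of \(X\) survives, giving \(\E(W\alpha^\top G)\,\E(R^2)\). Using (A3), \(\E(W\alpha^\top G)=\sum_j\pi_j\alpha_j^2/\sum_\ell\pi_\ell\alpha_\ell^2=1\). So this term equals \(\lambda^2\sigma_U^2+\sigma_{\varepsilon_2}^2\), which after dividing by \(V_X\) produces the first term of \eqref{eq:omega12_multi}.

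\item \emph{The term \(\E(WX^2R)\).} Expanding \(X^2\), every summand either has a factor of \(W\) standing alone against terms independent of \(G\) (so it has mean zero), or is odd in \(G\). The only surviving piece is \(2\,\E(W\alpha^\top G)\,\E\{(U+\varepsilon_1)R\}=2\lambda\sigma_U^2\). Multiplying by \((\gamma-\beta_{XY})=-\lambda\sigma_U^2/V_X\) and dividing by \(V_X\) gives \(-2\lambda^2\sigma_U^4/V_X^2\).
\end{enumerate}

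The delicate point in (ii) is \(\E\{W(\alpha^\top G)^2R\}\). It factors as \(\E\{W(\alpha^\top G)^2\}\,\E(R)=0\), because \(\E(R)=0\). This step therefore requires no symmetry or third-moment condition on \(G\), only the independence in (A2) and mean-zero \(R\).

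The main obstacle is bookkeeping in this cross-moment expansion. We must check that every cubic term in \(G\) is multiplied by an independent mean-zero factor, and that the result is free of the weights \(\pi_j\) because of the normalization \(\E(W\alpha^\top G)=1\). This weight-independence is why \(\omega_{12}\) coincides with the single-SNP formula \eqref{eq:omega12_single}. Finally, Slutsky's lemma transfers the result from \(\pi_j\) to \(\pi_{j,n}\), since the lemma already absorbs weight fluctuations into the \(o_p(1)\) term.
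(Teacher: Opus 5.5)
Your proposal is correct and follows essentially the same route as the paper's proof in Appendix~\ref{app:multi-proof}. Both invoke Lemma~\ref{lem:joint-linearization} for the linearization and CLT, obtain $\omega_{22}^M$ as $\E(W^2)\E(R^2)$ using pairwise uncorrelated SNPs, and obtain $\omega_{12}$ from $\E(WXR^2)=\lambda^2\sigma_U^2+\sigma_{\varepsilon_2}^2$ and $\E(WX^2R)=2\lambda\sigma_U^2$ via the normalization $\E(W\alpha^\top G)=1$.

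One small wording fix: in step (ii) the terms that vanish are killed by $\E(W)=0$ (the $(U+\varepsilon_1)^2$ term) and by $\E(R)=0$ (the term cubic in $G$). Neither relies on oddness in $G$, as your own follow-up paragraph makes clear.
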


The proof of Theorem~\ref{thm:multi} is given in Appendix~\ref{app:joint-asymptotics}, Subsection~\ref{app:multi-proof}. As in the single-SNP case, no normality of \((G_1,\ldots,G_K)\), \(U\), \(\varepsilon_1\), or \(\varepsilon_2\) is required. The closed forms for \(\omega_{12}\) and \(\omega_{22}^{M}\) follow from the linear structural equations, independence, pairwise uncorrelated SNPs, and second moments. The entry \(\omega_{11}\), by contrast, is the robust OLS slope variance and can depend on fourth moments.

\section{Decorrelated Statistics}\label{sec:decorrelated}

Section~\ref{sec:problem} shows that association screening changes the null distribution of a conventional MR statistic whenever the first-order correlation between the screening estimator and the MR estimator is nonzero. Section~\ref{sec:joint} then establishes the joint expansion
\begin{equation}
\sqrt n
\begin{pmatrix}
\hat\beta_{XY}-\beta_{XY}\\
\hat\theta_n-\gamma
\end{pmatrix}
=
\frac1{\sqrt n}\sum_{i=1}^n
\begin{pmatrix}
\phi_{1i}\\
\phi_{2i}
\end{pmatrix}
+o_p(1)
\toD
\Normal\left(
\begin{pmatrix}0\\0\end{pmatrix},
\Omega
\right),
\qquad
\Omega=
\begin{pmatrix}
\omega_{11}&\omega_{12}\\
\omega_{12}&\omega_{22}
\end{pmatrix},
\label{eq:joint_unified_decor}
\end{equation}
where \(\hat\theta_n=\hat\beta_{MR}\) in the single-SNP case and \(\hat\theta_n=\hat\beta_{IVW}\) in the fixed-\(K\) multi-SNP case. The goal of this section is to turn this joint limit into a post-screening valid test of
\[
H_0^\gamma:\gamma=\gamma_0.
\]

\subsection{Population decorrelation and the oracle benchmark}\label{subsec:oracle_benchmark}

The correction is obtained by subtracting from the MR estimator the component that is linearly predictable from the observational association estimator in the joint Gaussian limit. Define
\begin{equation}
    r=\frac{\omega_{12}}{\omega_{11}},
    \qquad
    V_0=\omega_{22}-\frac{\omega_{12}^2}{\omega_{11}}.
\label{eq:r_v0}
\end{equation}
Assuming \(\Omega\) is positive definite, \(\omega_{11}>0\) and \(V_0>0\). The oracle benchmark is
\begin{equation}
U_0
=
\frac{\hat\theta_n-\gamma_0-r(\hat\beta_{XY}-\beta_{XY})}
{\sqrt{V_0/n}}.
\label{eq:U0}
\end{equation}
The numerator in \eqref{eq:U0} is the first-order residual from projecting \(\hat\theta_n\) on \(\hat\beta_{XY}\). Indeed,
\[
\Cov\{\hat\beta_{XY},\hat\theta_n-r\hat\beta_{XY}\}
=
\frac1n(\omega_{12}-r\omega_{11})+o(n^{-1})
=o(n^{-1}).
\]
Under the bivariate Gaussian limit, this zero first-order covariance implies asymptotic independence between the corrected numerator and any screening event determined by \(\hat\beta_{XY}\). This is the key selective-inference idea: validity is assessed after the data-dependent decision to test has been made, rather than under the unconditional sample space \cite{FithianSunTaylor2014,TaylorTibshirani2015,LeeSunSunTaylor2016}.

\begin{theorem}[Oracle decorrelation benchmark]\label{thm:U0}
Suppose the assumptions of Theorem~\ref{thm:single} or Theorem~\ref{thm:multi} hold and \(\Omega\) is positive definite. Under \(H_0^\gamma:\gamma=\gamma_0\),
\[
U_0\toD\Normal(0,1).
\]
Under the local alternative \(\gamma=\gamma_0+\Delta/\sqrt n\),
\[
U_0\toD\Normal\left(\frac{\Delta}{\sqrt{V_0}},1\right).
\]
Moreover, let \(A_n\) be an association-screening event whose limiting form is determined by the first coordinate in \eqref{eq:joint_unified_decor}, with positive limiting probability and boundary probability zero. Then, under the null,
\[
\mathcal L(U_0\mid A_n)\Rightarrow \Normal(0,1).
\]
\end{theorem}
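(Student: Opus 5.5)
The plan is to reduce all three claims to the continuous mapping theorem applied to the joint limit \eqref{eq:joint_unified_decor}, with Theorem~\ref{thm:single} or Theorem~\ref{thm:multi} supplying the Gaussian limit. Write $W_n=\sqrt n(\hat\beta_{XY}-\beta_{XY},\,\hat\theta_n-\gamma)^\top\toD W=(W_1,W_2)^\top\sim\Normal(0,\Omega)$. Under $H_0^\gamma$ we have $\gamma=\gamma_0$, so
\[
U_0=\frac{W_{n2}-rW_{n1}}{\sqrt{V_0}}=\ell^\top W_n,\qquad \ell=V_0^{-1/2}(-r,1)^\top .
\]
This is a fixed linear map of $W_n$. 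Hence $U_0\toD \ell^\top W$, which is Gaussian with mean zero and variance
\[
\frac{\omega_{22}-2r\omega_{12}+r^2\omega_{11}}{V_0}=\frac{\omega_{22}-\omega_{12}^2/\omega_{11}}{V_0}=1 .
\]
Positive definiteness of $\Omega$ guarantees $\omega_{11}>0$ and $V_0>0$, so $\ell$ is well defined.

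For the local alternative $\gamma_n=\gamma_0+\Delta/\sqrt n$, decompose
\[
U_0=\frac{\sqrt n(\hat\theta_n-\gamma_n)-r\sqrt n(\hat\beta_{XY}-\beta_{XY})}{\sqrt{V_0}}+\frac{\Delta}{\sqrt{V_0}} .
\]
The step needing care is that the data are now generated under a triangular array indexed by $\gamma_n$, so the fixed-parameter limit theorems do not apply verbatim. My plan is to use the asymptotically linear representation of Lemma~\ref{lem:joint-linearization} and note that $\gamma$ enters the model only through $Y=\gamma X+\lambda U+\varepsilon_2$. Since $\gamma_n\to\gamma_0$, the influence functions $\phi_{1}$ and $\phi_{2}$ and their second moments depend continuously on $\gamma$. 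The fourth-moment bound of Assumption~\ref{ass:joint} is uniform in a neighbourhood of $\gamma_0$, which yields the Lyapunov condition for the Lindeberg--Feller CLT applied to rowwise i.i.d. arrays. The remainder terms are smooth delta-method remainders in sample moments whose population values converge, so they remain $o_p(1)$ along the sequence. It follows that the first term again converges to $\Normal(0,1)$, with $\Omega$ evaluated at $\gamma_0$, and the shift gives $\Normal(\Delta/\sqrt{V_0},1)$. One further point: $\beta_{XY}$ depends on $\gamma_n$, so the centering in \eqref{eq:U0} must be read as the population value under the true parameter.

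For the selective claim, the main step is to show that the limit pair $(W_1,\ell^\top W)$ is jointly Gaussian with covariance
\[
V_0^{-1/2}(\omega_{12}-r\omega_{11})=0 ,
\]
so the two coordinates are independent. Interpret the screening event as $A_n=\{T_{A,n}\in\mathcal A\}$ for a fixed Borel set $\mathcal A$. Here $T_{A,n}=W_{n1}/\sqrt{\omega_{11}}+\mu_n$ up to $o_p(1)$, where the estimated standard error in \eqref{eq:screening_event_tstat} is replaced by $\sqrt{\omega_{11}/n}$ using its consistency. For a fixed local location $\mu_n\to\mu$, the vector $(T_{A,n},U_0)$ converges to $(W_1/\sqrt{\omega_{11}}+\mu,\ \ell^\top W)$, whose two components are independent. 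For any $t$, the set $\mathcal A\times(-\infty,t]$ has a limit-boundary of probability zero: $\partial\mathcal A$ is null by assumption and the normal marginal is continuous. The portmanteau theorem then gives
\[
\Prb(A_n,\,U_0\le t)\to \Prb(T_A\in\mathcal A)\,\Phi(t) .
\]
Dividing by $\Prb(A_n)\to\Prb(T_A\in\mathcal A)>0$ yields $\Prb(U_0\le t\mid A_n)\to\Phi(t)$, which is the claim.

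I expect the main obstacle to be the formalization of ``limiting form determined by the first coordinate'' when $\beta_{XY}\ne0$ is fixed, so that $\mu_n\to\pm\infty$. In that regime I would verify directly that $\Prb(A_n)\to1$, which makes conditioning asymptotically vacuous: $|\Prb(B\mid A_n)-\Prb(B)|\le \Prb(A_n^c)/\Prb(A_n)\to0$. I would handle the bounded-$\mu_n$ regime through subsequences, and combine the two regimes to cover general sequences.
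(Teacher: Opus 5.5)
Your proposal is correct and is essentially the paper's proof. Both use continuous mapping on the joint limit \eqref{eq:joint_unified_decor}, the zero limiting covariance $\omega_{12}-r\omega_{11}=0$ to get independence of the Gaussian pair, and $\Prb(U_0\in B,A_n)/\Prb(A_n)\to\Prb(U\in B)$ for continuity sets $B$; your handling of the drift $\mu_n$ and of the triangular array is extra care the paper omits, since it simply posits that $A_n$ is asymptotically $\{S_n\in\mathcal A\}$ for the centered $S_n$. One small correction: finite fourth moments of $(G,U,\varepsilon_1,\varepsilon_2)$ give only second moments of the product influence functions, so Lyapunov is not available, but you do not need a triangular-array CLT at all, because with $\delta=\lambda U+\varepsilon_2$ one has $\hat\beta_{XY}-\beta_{XY}=S_{X\delta}/S_{XX}-\kappa$ and $\hat\beta_{MR}-\gamma=S_{G\delta}/S_{GX}$, whose joint law does not depend on $\gamma$ (likewise for IVW, apart from the weights $\pi_{j,n}$, which enter only through their limits).
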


The proof of Theorem~\ref{thm:U0} is given in Appendix~\ref{app:decorrelated}, Subsection~\ref{app:proof-U0}. The theorem is not proposed as a data-analysis procedure because \(U_0\) requires the unknown population association \(\beta_{XY}\) and the unknown covariance entries in \(\Omega\). Its role is to identify the decorrelation direction \(r=\omega_{12}/\omega_{11}\).

\subsection{Using external association information}\label{subsec:external_association}

The oracle statistic subtracts \(r(\hat\beta_{XY}-\beta_{XY})\), but \(\beta_{XY}\) is unknown. Replacing \(\beta_{XY}\) by \(\hat\beta_{XY}\) from the same sample would remove the random quantity that defines the screening event and would not produce a useful causal statistic. The replacement must therefore come from information independent of the main sample used for association screening and MR.

Let \(\hat\beta_{XY}^{(s)}\) be an independent estimator of the same population exposure--outcome association \(\beta_{XY}\). For notational simplicity, suppose it is obtained from an external sample of size \(m\) generated from the same population, so that
\begin{equation}
\sqrt m\{\hat\beta_{XY}^{(s)}-\beta_{XY}\}\toD\Normal(0,\omega_{11}),
\label{eq:external_association_clt}
\end{equation}
independently of the main-sample estimators. In applications, this external input may be available as summary statistics from a previous association study: what is needed is an independent estimate of the observational association between the same exposure and outcome, together with its sampling precision. Standard SNP--exposure and SNP--outcome GWAS summary statistics alone are not sufficient unless they also provide, or can be combined with, such an independent exposure--outcome association estimate.

With this external association estimate, define
\begin{equation}
U_1
=
\frac{\hat\theta_n-\gamma_0-r(\hat\beta_{XY}-\hat\beta_{XY}^{(s)})}
{\sqrt{V_1/n}},
\label{eq:U1}
\end{equation}
where
\begin{equation}
V_1
=
\omega_{22}
+
\left(\frac nm-1\right)\frac{\omega_{12}^2}{\omega_{11}}.
\label{eq:V1}
\end{equation}
The variance in \eqref{eq:V1} is the sum of the oracle residual variance and the additional noise introduced by estimating \(\beta_{XY}\) externally:
\[
\frac{V_0}{n}+r^2\frac{\omega_{11}}{m}
=
\frac1n\left\{\omega_{22}+\left(\frac nm-1\right)\frac{\omega_{12}^2}{\omega_{11}}\right\}.
\]
Thus, if \(m=n\), \(V_1=\omega_{22}\). If \(m>n\), the external association estimate is more precise than the main-sample association estimate, and \(V_1<\omega_{22}\) whenever \(\omega_{12}\ne0\).

\begin{theorem}[External-association benchmark]\label{thm:U1}
Suppose the assumptions of Theorem~\ref{thm:U0} hold. Assume further that \(\hat\beta_{XY}^{(s)}\) satisfies \eqref{eq:external_association_clt}, is independent of the main sample, and \(n/m\to\eta\in(0,\infty)\). Under \(H_0^\gamma:\gamma=\gamma_0\),
\[
U_1\toD\Normal(0,1).
\]
Under the local alternative \(\gamma=\gamma_0+\Delta/\sqrt n\),
\[
U_1\toD\Normal\left(\frac{\Delta}{\sqrt{V_1}},1\right).
\]
Moreover, for association-screening events \(A_n\) satisfying the conditions in Theorem~\ref{thm:U0},
\[
\mathcal L(U_1\mid A_n)\Rightarrow \Normal(0,1)
\]
under the null.
\end{theorem}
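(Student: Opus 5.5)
The plan is to reduce everything to a trivariate Gaussian limit and then repeat the argument of Theorem~\ref{thm:U0}. Write the numerator of $U_1$ as
\[
\hat\theta_n-\gamma_0-r(\hat\beta_{XY}-\beta_{XY})+r(\hat\beta_{XY}^{(s)}-\beta_{XY}).
\]
The first part is the oracle numerator. The second is an independent external fluctuation.

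By Lemma~\ref{lem:joint-linearization} and Theorem~\ref{thm:single} or Theorem~\ref{thm:multi}, under $\gamma=\gamma_0+\Delta/\sqrt n$ we have
\[
\sqrt n(\hat\beta_{XY}-\beta_{XY},\ \hat\theta_n-\gamma_0)\toD\Normal\bigl((0,\Delta),\Omega\bigr).
\]
For the local alternative, I would note that the linearization is centered at $\gamma$, so $\hat\theta_n-\gamma_0=(\hat\theta_n-\gamma)+\Delta/\sqrt n$. I would also note that $\Omega$ depends continuously on $\gamma$, so the drifting parameter does not change the limit. A Lindeberg-type triangular-array version of the CLT on the influence functions covers this; this is the same point already handled in Theorem~\ref{thm:U0}.

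For the external term, \eqref{eq:external_association_clt} and $n/m\to\eta$ give
\[
\sqrt n(\hat\beta_{XY}^{(s)}-\beta_{XY})=\sqrt{n/m}\,\sqrt m(\hat\beta_{XY}^{(s)}-\beta_{XY})\toD\Normal(0,\eta\,\omega_{11}).
\]
Independence of the two samples implies joint convergence of the main-sample pair and the external term to a product limit; characteristic functions factor. Call the limit $(W_1,W_2,W_3)$, with $(W_1,W_2)\sim\Normal((0,\Delta),\Omega)$ and $W_3\sim\Normal(0,\eta\omega_{11})$ independent of $(W_1,W_2)$.

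By the continuous mapping theorem,
\[
\sqrt n\,\bigl[\hat\theta_n-\gamma_0-r(\hat\beta_{XY}-\hat\beta_{XY}^{(s)})\bigr]\toD W_2-rW_1+rW_3.
\]
This limit has mean $\Delta$ and variance $V_0+r^2\eta\omega_{11}=\omega_{22}+(\eta-1)\omega_{12}^2/\omega_{11}$. That equals $\lim V_1$, where $V_1$ is computed with $n/m$ and $n/m\to\eta$. Dividing by $\sqrt{V_1}$, which converges to a positive constant because $V_0>0$, Slutsky gives $U_1\toD\Normal(\Delta/\sqrt{V_1},1)$, and the case $\Delta=0$ gives the null result.

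The conditional statement is the main step. Write $U_1=h_n(\sqrt n(\hat\beta_{XY}-\beta_{XY}),\ \cdot)$ jointly with the screening statistic. In the limit, the screening event is $\{W_1+\mu\in C\}$ for a set $C$ whose boundary has probability zero.

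The argument rests on independence of the limit numerator from $W_1$. First, $\Cov(W_2-rW_1,W_1)=\omega_{12}-r\omega_{11}=0$, so the Gaussian pair $(W_1,\,W_2-rW_1)$ is independent. Second, $W_3$ is independent of $(W_1,W_2)$. Hence $W_2-rW_1+rW_3$ is independent of $W_1$.

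Then, for any bounded continuous $f$, I would use joint weak convergence of $(U_1, T_A)$ together with the boundary-zero condition (portmanteau). This gives
\[
\E\{f(U_1)I(A_n)\}\to\E\{f(U)\}\Prb(A),
\]
where $U$ is the standard normal limit of $U_1$ under the null and $A$ is the limiting screening event. Dividing by $\Prb(A_n)\to\Prb(A)>0$ yields $\mathcal L(U_1\mid A_n)\Rightarrow\Normal(0,1)$.

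I expect the only delicate point to be making precise how the screening event $A_n$ is written. The two-sided $t$-screen uses $\widehat{\se}$ and the centering $\mu_n$, so it must be expressed through the first coordinate plus $o_p(1)$ with a convergent or fixed shift. I would handle this exactly as assumed in Theorem~\ref{thm:U0}, passing to subsequences along which $\mu_n$ converges in $[-\infty,\infty]$ if necessary.
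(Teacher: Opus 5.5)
Your proposal is correct and follows essentially the same route as the paper. You split the $\sqrt n$-scaled numerator into the oracle residual $B_n-rA_n^*$ plus the independent external term $r\sqrt{n/m}\,C_m$, obtain the limiting variance $V_0+r^2\eta\omega_{11}=\lim V_1$, and then use $\omega_{12}-r\omega_{11}=0$ together with independence of the external limit to get asymptotic independence from the screening coordinate, finishing with the same ratio/portmanteau argument as the paper's final appendix subsection. Your remarks on the triangular-array CLT under drifting $\gamma$ and on handling the shift $\mu_n$ along subsequences add care that the paper absorbs into its assumption that $A_n$ has limiting form $\{S_n\in\mathcal A\}$; they do not change the argument.
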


The proof of Theorem~\ref{thm:U1} is given in Appendix~\ref{app:decorrelated}, Subsection~\ref{app:proof-U1}. The statistic \(U_1\) is still a benchmark rather than a fully feasible statistic, because \(r\) and \(V_1\) depend on the unknown covariance matrix \(\Omega\).

\begin{remark}[External summary statistics with a reported standard error]
The same construction can be written directly in terms of summary-level external association information. If an independent study reports \(\hat\beta_{XY}^{(s)}\) and a standard error \(\widehat{\se}_s\) for this association estimate, then the external-noise contribution in the denominator is \(r^2\widehat{\se}_s^{\,2}\). Equivalently, one may replace \(V_1/n\) by \(V_0/n+r^2\widehat{\se}_s^{\,2}\), with the corresponding plug-in version used for the feasible statistic below. The main text keeps the equal-population notation in \eqref{eq:external_association_clt}--\eqref{eq:V1} to make the role of the external sample size transparent.
\end{remark}

\subsection{The feasible decorrelated statistic}\label{subsec:feasible_statistic}

The implementable statistic replaces \(\Omega\) by a consistent estimate. Let
\[
\hat\Omega=
\begin{pmatrix}
\hat\omega_{11}&\hat\omega_{12}\\
\hat\omega_{12}&\hat\omega_{22}
\end{pmatrix}
\]
be a consistent estimator of \(\Omega\), and define
\begin{equation}
\hat r=\frac{\hat\omega_{12}}{\hat\omega_{11}},
\qquad
\hat V_1
=
\hat\omega_{22}
+
\left(\frac nm-1\right)\frac{\hat\omega_{12}^2}{\hat\omega_{11}}.
\label{eq:rhat_vhat}
\end{equation}
The feasible decorrelated MR statistic is
\begin{equation}
U_2
=
\frac{\hat\theta_n-\gamma_0-
\hat r(\hat\beta_{XY}-\hat\beta_{XY}^{(s)})}
{\sqrt{\hat V_1/n}}.
\label{eq:U2}
\end{equation}
This is the statistic to be reported in practice. It estimates the same decorrelation direction as \(U_1\), but does so using the empirical influence-function covariance matrix described next.

\subsection{Influence-function covariance estimation}\label{subsec:plugin_estimation}

The estimator of \(\Omega\) should target the joint covariance matrix in \eqref{eq:joint_unified_decor}. We therefore estimate \(\Omega\) from the empirical analogues of the influence functions \(\phi_1\) and \(\phi_2\) in Section~\ref{sec:joint}. This is the usual sandwich-covariance logic for asymptotically linear estimators and general M-estimation \cite{Huber1967,White1980,NeweyMcFadden1994,VanDerVaart1998}.

Let
\[
\tilde X_i=X_i-\bar X,
\qquad
\tilde Y_i=Y_i-\bar Y,
\qquad
\tilde G_i=G_i-\bar G,
\qquad
\tilde G_{ji}=G_{ji}-\bar G_j.
\]
For the observational association estimator \(\hat\beta_{XY}=S_{XY}/S_{XX}\), define
\begin{equation}
\hat\phi_{1i}
=
\frac{\tilde X_i(\tilde Y_i-\hat\beta_{XY}\tilde X_i)}{S_{XX}}.
\label{eq:phi1_hat}
\end{equation}

In the single-SNP case, \(\hat\theta_n=\hat\beta_{MR}=S_{GY}/S_{GX}\), and the sample analogue of the second influence function is
\begin{equation}
\hat\phi_{2i}
=
\frac{\tilde G_i(\tilde Y_i-\hat\beta_{MR}\tilde X_i)}{S_{GX}}.
\label{eq:phi2_single_hat}
\end{equation}

In the fixed-\(K\) multi-SNP case, \(\hat\theta_n=\hat\beta_{IVW}\). Let
\begin{equation}
\hat q_j
=
\frac{\pi_{j,n}\hat\beta_{Xj}/S_{G_jG_j}}
{\sum_{\ell=1}^K\pi_{\ell,n}\hat\beta_{X\ell}^2},
\qquad j=1,\ldots,K,
\label{eq:qhat_multi}
\end{equation}
where \(\hat\beta_{Xj}=S_{G_jX}/S_{G_jG_j}\). Define
\begin{equation}
\hat\phi_{2i}
=
\left(\sum_{j=1}^K\hat q_j\tilde G_{ji}\right)
(\tilde Y_i-\hat\beta_{IVW}\tilde X_i).
\label{eq:phi2_multi_hat}
\end{equation}

Let \(\bar{\hat\phi}_k=n^{-1}\sum_{i=1}^n\hat\phi_{ki}\), \(k=1,2\). The plug-in covariance estimator is
\begin{equation}
\hat\Omega
=
\frac1n\sum_{i=1}^n
\begin{pmatrix}
\hat\phi_{1i}-\bar{\hat\phi}_1\\
\hat\phi_{2i}-\bar{\hat\phi}_2
\end{pmatrix}
\begin{pmatrix}
\hat\phi_{1i}-\bar{\hat\phi}_1 &
\hat\phi_{2i}-\bar{\hat\phi}_2
\end{pmatrix}.
\label{eq:omega_hat_if}
\end{equation}
Equivalently, the three entries \(\hat\omega_{11}\), \(\hat\omega_{12}\), and \(\hat\omega_{22}\) are the empirical variance of \(\hat\phi_{1i}\), the empirical covariance between \(\hat\phi_{1i}\) and \(\hat\phi_{2i}\), and the empirical variance of \(\hat\phi_{2i}\), respectively. Omitting the empirical centering gives the same first-order limit because the population influence functions have mean zero, but the centered version in \eqref{eq:omega_hat_if} is numerically more stable.

\begin{proposition}[Consistency of the influence-function covariance estimator]\label{prop:plugin_consistency}
Suppose the assumptions of Theorem~\ref{thm:single} or Theorem~\ref{thm:multi} hold, the relevant denominators are bounded away from zero with probability tending to one, and \(\pi_{j,n}\toP\pi_j\in(0,\infty)\) in the multi-SNP case. Then
\[
\hat\Omega\toP\Omega.
\]
Consequently,
\[
\hat r\toP r,
\qquad
\hat V_1\toP V_1.
\]
\end{proposition}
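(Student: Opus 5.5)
The plan is a standard plug-in argument. The estimated influence functions are smooth functions of a finite set of sample moments and of the observation itself, so $\hat\Omega$ can be written as a continuous function of empirical averages of polynomial products of $(X_i,Y_i,G_{1i},\dots,G_{Ki})$. These averages converge by the weak law of large numbers, and the continuous mapping theorem then finishes the proof.

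\textbf{Step 1: parameter estimates.} First I would collect the consistency facts already implicit in the proofs of Lemma~\ref{lem:joint-linearization} and Theorems~\ref{thm:single}--\ref{thm:multi}: $\bar X,\bar Y,\bar G_j\toP 0$, every sample covariance $S_{AB}\toP\Cov(A,B)$, and hence $\hat\beta_{XY}\toP\beta_{XY}$, $\hat\beta_{MR}\toP\gamma$ (or $\hat\beta_{IVW}\toP\gamma$), $\hat\beta_{Xj}\toP\alpha_j$, and $\hat q_j\toP q_j:=\pi_j\alpha_j/\{\sigma_{G_j}^2\sum_\ell\pi_\ell\alpha_\ell^2\}$. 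For these limits I would use $\pi_{j,n}\toP\pi_j$, the assumption that denominators stay bounded away from zero with probability tending to one, and the continuous mapping theorem.

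\textbf{Step 2: expand the empirical second moments.} Consider for instance $\hat\omega_{12}$. Expanding $\tilde X_i(\tilde Y_i-\hat\beta_{XY}\tilde X_i)\tilde G_i(\tilde Y_i-\hat\beta_{MR}\tilde X_i)$ gives a polynomial in the observations, the sample means, and the two slope estimates. After dividing by $S_{XX}S_{GX}$, $n^{-1}\sum_i\hat\phi_{1i}\hat\phi_{2i}$ becomes a finite sum of terms of the form (estimated coefficient)$\times n^{-1}\sum_i$(monomial of degree at most four in $X_i,Y_i,G_{ji}$). Each coefficient converges in probability by Step 1. Each monomial average converges to its expectation by the WLLN, because Assumption~\ref{ass:joint}(A2) gives finite fourth moments of $G,U,\varepsilon_1,\varepsilon_2$, and the linear model then gives finite fourth moments of $X$ and $Y$ (via Minkowski). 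The sample-mean corrections enter only through factors like $\bar X$, each of which is $o_p(1)$ times an $O_p(1)$ average. Reassembling the terms gives $n^{-1}\sum_i\hat\phi_{1i}\hat\phi_{2i}\toP\E(\phi_1\phi_2)$. The same argument applies to the other two entries and to the IVW case, where one uses $\sum_j\hat q_j\tilde G_{ji}$ with $\hat q_j\toP q_j$.

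\textbf{Step 3: centering and conclusion.} The same reasoning with first-order monomials gives $\bar{\hat\phi}_k\toP\E\phi_k=0$. The mean-zero property is checked directly: $\E\{X(Y-\beta_{XY}X)\}=0$ by the definition of $\beta_{XY}$, and $\E\{G_j(Y-\gamma X)\}=\E\{G_j(\lambda U+\varepsilon_2)\}=0$ by (A2). Therefore $\hat\Omega\toP\Var(\phi_1,\phi_2)^\top=\Omega$. Finally, $\hat r$ and $\hat V_1$ are continuous functions of $\hat\Omega$ and $n/m$ on the set $\{\omega_{11}>0\}$, and $\omega_{11}>0$ by positive definiteness. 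With $n/m\to\eta$, the continuous mapping theorem gives $\hat r\toP r$ and $\hat V_1\toP\omega_{22}+(\eta-1)\omega_{12}^2/\omega_{11}$. This limit is the asymptotic value of $V_1$. Should $V_1$ be read literally at finite $n$, I would state the conclusion as $\hat V_1-V_1\toP0$.

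\textbf{Main obstacle.} The delicate part is the bookkeeping in Step 2. One has to verify that every monomial has degree at most four, so that finite fourth moments suffice. This holds because each $\hat\phi$ is quadratic in the data, so products of two of them are quartic. One also has to control the replacement of population centering by sample centering uniformly over the terms. I would handle both points by bounding each cross term by H\"older's inequality, giving an $O_p(1)$ average times an $o_p(1)$ estimation error.
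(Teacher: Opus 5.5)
Your argument is correct, but it is organized differently from the paper's proof. The paper first isolates a generic lemma (Lemma~\ref{lem:estimated-if-cov}). It says that if $\E\|\phi_i\|^2<\infty$ and $n^{-1}\sum_i\|\hat\phi_i-\phi_i\|^2=o_p(1)$, then the centered empirical second-moment matrix of the $\hat\phi_i$ converges to $\Omega$. The paper proves this with a Cauchy--Schwarz bound on the difference of empirical second moments, together with the law of large numbers for $\phi_i\phi_i^\top$. The model-specific work then reduces to checking that empirical mean-square condition. The paper only sketches this check, citing consistency of $S_{XX}$, $S_{GX}$, $\hat\beta_{XY}$, $\hat\theta_n$, $\hat q_j$ and tightness of fourth-order empirical moments.

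You never compare $\hat\phi_i$ with $\phi_i$ observation by observation. Instead you write each entry of $\hat\Omega$ directly as a finite sum of consistent random coefficients times empirical averages of monomials of degree at most four. You then apply the weak law to each average and identify the limit as the corresponding entry of $\Var(\phi_1,\phi_2)^\top$ by continuous mapping.

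Your route is more elementary and makes explicit where the finite-fourth-moment condition in (A2) enters, which is exactly the step the paper leaves implicit. Verifying the paper's $L^2$ condition would in fact require the same monomial bookkeeping you carry out. The paper's route is more modular: its lemma applies verbatim to any estimated influence functions satisfying the $L^2$ condition, including ones that are not polynomial in the data. Your argument relies on the quadratic structure specific to these covariance-ratio estimators.

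Two small remarks. First, your Step~3 can be shortened. The estimating equations defining $\hat\beta_{XY}$, $\hat\beta_{MR}$ and $\hat\beta_{IVW}$ give $\bar{\hat\phi}_1=\bar{\hat\phi}_2=0$ exactly. For instance, $n^{-1}\sum_i\tilde X_i(\tilde Y_i-\hat\beta_{XY}\tilde X_i)=S_{XY}-\hat\beta_{XY}S_{XX}=0$, and the IVW case works the same way after substituting $\hat q_j$. So the empirical centering in $\hat\Omega$ is vacuous. Also, the averages involved there are of degree-two monomials, not ``first-order monomials.''

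Second, your caveats are apt: $\omega_{11}>0$ comes from the positive-definiteness assumption of Theorem~\ref{thm:U0}, and the statement about $\hat V_1$ needs $n/m\to\eta$, or at least $n/m$ bounded. The paper's proof uses both facts without listing them among the proposition's hypotheses.
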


The proof of Proposition~\ref{prop:plugin_consistency} is given in Appendix~\ref{app:omega-estimation}. The argument is the standard plug-in consistency proof for an empirical covariance matrix of estimated influence functions: the sample analogues in \eqref{eq:phi1_hat}--\eqref{eq:phi2_multi_hat} converge in mean square to the population influence functions, and the empirical second moments therefore converge to the entries of \(\Omega\).

\begin{center}
\fbox{%
\begin{minipage}{0.92\textwidth}
\textbf{Algorithm 1. Feasible decorrelated MR test after association screening.}

\smallskip
\noindent\textbf{Input:} main-sample observations \((X_i,Y_i,G_i)\) or \((X_i,Y_i,G_{1i},\ldots,G_{Ki})\); an independent external estimate \(\hat\beta_{XY}^{(s)}\) of the observational exposure--outcome association; external sample size \(m\) or an equivalent external standard error; null value \(\gamma_0\).
\begin{enumerate}[leftmargin=2em]
    \item Compute \(\hat\beta_{XY}=S_{XY}/S_{XX}\).
    \item Compute \(\hat\theta_n\): use \(\hat\beta_{MR}=S_{GY}/S_{GX}\) for one SNP and \(\hat\beta_{IVW}\) for multiple SNPs.
    \item Compute \(\hat\phi_{1i}\) from \eqref{eq:phi1_hat} and compute \(\hat\phi_{2i}\) from \eqref{eq:phi2_single_hat} or \eqref{eq:phi2_multi_hat}.
    \item Estimate \(\hat\Omega\) by \eqref{eq:omega_hat_if}; set \(\hat r=\hat\omega_{12}/\hat\omega_{11}\) and \(\hat V_1=\hat\omega_{22}+(n/m-1)\hat\omega_{12}^2/\hat\omega_{11}\).
    \item Report \(U_2\) in \eqref{eq:U2}; reject a two-sided null at level \(a\) if \(|U_2|>z_{1-a/2}\).
\end{enumerate}
\end{minipage}}
\end{center}

\begin{theorem}[Feasible post-screening validity]\label{thm:U2}
Suppose the assumptions of Theorem~\ref{thm:U1} and Proposition~\ref{prop:plugin_consistency} hold. Then
\[
U_2-U_1=o_p(1).
\]
Consequently, under \(H_0^\gamma:\gamma=\gamma_0\),
\[
U_2\toD\Normal(0,1),
\]
and under the local alternative \(\gamma=\gamma_0+\Delta/\sqrt n\),
\[
U_2\toD\Normal\left(\frac{\Delta}{\sqrt{V_1}},1\right).
\]
Moreover, for association-screening events \(A_n\) satisfying the conditions in Theorem~\ref{thm:U0},
\[
\mathcal L(U_2\mid A_n)\Rightarrow \Normal(0,1)
\]
under the null.
\end{theorem}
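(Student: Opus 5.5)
The argument is a Slutsky comparison of \(U_2\) with the benchmark \(U_1\), followed by an appeal to Theorem~\ref{thm:U1} for the limits. Write \(D_n=\sqrt n(\hat\beta_{XY}-\hat\beta_{XY}^{(s)})\) and \(N_n=\sqrt n(\hat\theta_n-\gamma_0)\). Then
\[
U_1=\frac{N_n-rD_n}{\sqrt{V_1}},\qquad U_2=\frac{N_n-\hat rD_n}{\sqrt{\hat V_1}}.
\]

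\textbf{Step 1: tightness of \(D_n\).} Decompose
\[
D_n=\sqrt n(\hat\beta_{XY}-\beta_{XY})-\sqrt{n/m}\,\sqrt m(\hat\beta_{XY}^{(s)}-\beta_{XY}).
\]
The first term is \(O_p(1)\) by Theorem~\ref{thm:single} or Theorem~\ref{thm:multi}. The second is \(O_p(1)\) by \eqref{eq:external_association_clt} together with \(n/m\to\eta\). Hence \(D_n=O_p(1)\). Under the null, \(N_n=O_p(1)\). Under the local alternative, \(N_n=\Delta+O_p(1)\), so it remains tight as well.

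\textbf{Step 2: comparing the two statistics.} Split the difference as
\[
U_2-U_1=\frac{(r-\hat r)D_n}{\sqrt{\hat V_1}}+(N_n-rD_n)\Big(\hat V_1^{-1/2}-V_1^{-1/2}\Big).
\]
By Proposition~\ref{prop:plugin_consistency}, \(\hat r\toP r\) and \(\hat V_1\toP V_1\). Here I must check that \(V_1>0\), so that the continuous mapping theorem applies to \(x\mapsto x^{-1/2}\). Writing \(\eta_n=n/m\), we have
\[
V_1=V_0+\eta_n\,\omega_{12}^2/\omega_{11}\ \ge V_0>0
\]
by positive definiteness of \(\Omega\). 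One point needs care: \(\hat V_1\) uses the finite-\(n\) ratio \(n/m\) rather than \(\eta\), and so does \(V_1\). Both therefore converge to \(V_0+\eta\,\omega_{12}^2/\omega_{11}\), and the difference \(\hat V_1-V_1\) vanishes. Each term in the split is then \(o_p(1)\cdot O_p(1)=o_p(1)\), which gives \(U_2-U_1=o_p(1)\).

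\textbf{Step 3: unconditional limits.} Slutsky's lemma combined with Theorem~\ref{thm:U1} gives the null limit \(\Normal(0,1)\) and the local-alternative limit \(\Normal(\Delta/\sqrt{V_1},1)\).

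\textbf{Step 4: the conditional statement.} This is the main obstacle, because \(o_p(1)\) perturbations do not automatically carry over to conditional laws. Since \(\Prb(A_n)\to p>0\), for any \(\epsilon>0\) we have
\[
\Prb(|U_2-U_1|>\epsilon\mid A_n)\le \Prb(|U_2-U_1|>\epsilon)/\Prb(A_n)\to0.
\]
So \(U_2-U_1=o_p(1)\) also under the conditional measures \(\Prb(\cdot\mid A_n)\). Slutsky's lemma applied under these conditional measures, together with \(\mathcal L(U_1\mid A_n)\Rightarrow\Normal(0,1)\) from Theorem~\ref{thm:U1}, gives \(\mathcal L(U_2\mid A_n)\Rightarrow\Normal(0,1)\).

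A concrete form of this last step uses the CDF directly. For each continuity point \(t\),
\[
\Prb(U_1\le t-\epsilon\mid A_n)-o(1)\le\Prb(U_2\le t\mid A_n)\le \Prb(U_1\le t+\epsilon\mid A_n)+o(1).
\]
Let \(n\to\infty\) and then \(\epsilon\to0\), using continuity of \(\Phi\), to conclude \(\Prb(U_2\le t\mid A_n)\to\Phi(t)\).
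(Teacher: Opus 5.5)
Your proof is correct and follows the paper's route: the paper likewise isolates the numerator discrepancy $-(\hat r-r)(\hat\beta_{XY}-\hat\beta_{XY}^{(s)})$, notes that $\sqrt n(\hat\beta_{XY}-\hat\beta_{XY}^{(s)})=O_p(1)$ because $n/m\to\eta$, uses $\hat V_1/V_1\toP 1$ from Proposition~\ref{prop:plugin_consistency}, and then applies Slutsky's theorem to inherit the limits of Theorem~\ref{thm:U1}. The only minor difference is in the conditional claim. The paper passes through the joint limit of the standardized association coordinate and $U_2$, which has independent components, and takes a ratio of probabilities; your bound $\Prb(|U_2-U_1|>\epsilon\mid A_n)\le\Prb(|U_2-U_1|>\epsilon)/\Prb(A_n)$ reaches the same conclusion directly from the conditional statement of Theorem~\ref{thm:U1}, and it spells out the step the paper only calls ``the usual interpretation.''
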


The proof of Theorem~\ref{thm:U2} is given in Appendix~\ref{app:decorrelated}, Subsection~\ref{app:proof-U2}. This theorem is the formal post-screening result for the implementable statistic: \(U_2\) estimates and removes the first-order component of the MR estimator that is correlated with the observational association statistic, so conditioning on the association screen does not change its limiting null distribution.

\subsection{Power }\label{subsec:decorrelated_power}

The decorrelation is designed for type I error control after screening, but it also changes the local power through the variance factor \(V_1\). For the uncorrected MR statistic
\[
T_R=\frac{\hat\theta_n-\gamma_0}{\sqrt{\omega_{22}/n}},
\]
the local noncentrality parameter under \(\gamma=\gamma_0+\Delta/\sqrt n\) is \(\Delta/\sqrt{\omega_{22}}\). For \(U_2\), the corresponding local noncentrality parameter is \(\Delta/\sqrt{V_1}\). Therefore the asymptotic relative efficiency of the feasible decorrelated test relative to the uncorrected MR test is
\begin{equation}
\mathrm{ARE}_{U_2:R}
=
\frac{\omega_{22}}{V_1}
=
\frac{\omega_{22}}
{\omega_{22}+\left(\frac nm-1\right)\omega_{12}^2/\omega_{11}}.
\label{eq:are}
\end{equation}
Equation~\eqref{eq:are} has a simple interpretation. If \(m>n\), the external association estimate is more precise than the main-sample association estimate, and the corrected statistic can have larger local power than the uncorrected statistic whenever \(\omega_{12}\ne0\). If \(m<n\), the external association estimate is relatively noisy, and the correction may reduce local power. Thus, external association information is not merely a device for feasibility; its precision directly determines the efficiency of the corrected test.

Figure~\ref{fig:conditional_power} illustrates this behavior under alternatives. The conventional package-style statistic may start above the nominal level at \(\gamma=0\), reflecting the conditional type I error inflation described in Section~\ref{sec:problem}. The decorrelated statistics start near the nominal level and then gain power as \(\gamma\) increases. The oracle and external-association benchmarks show the theoretical effect of decorrelation and external-sample precision, while the feasible statistic \(U_2\) shows the performance of the implementable plug-in procedure.

\begin{figure}[htbp]
\centering
\safeincludegraphics[width=0.98\textwidth]{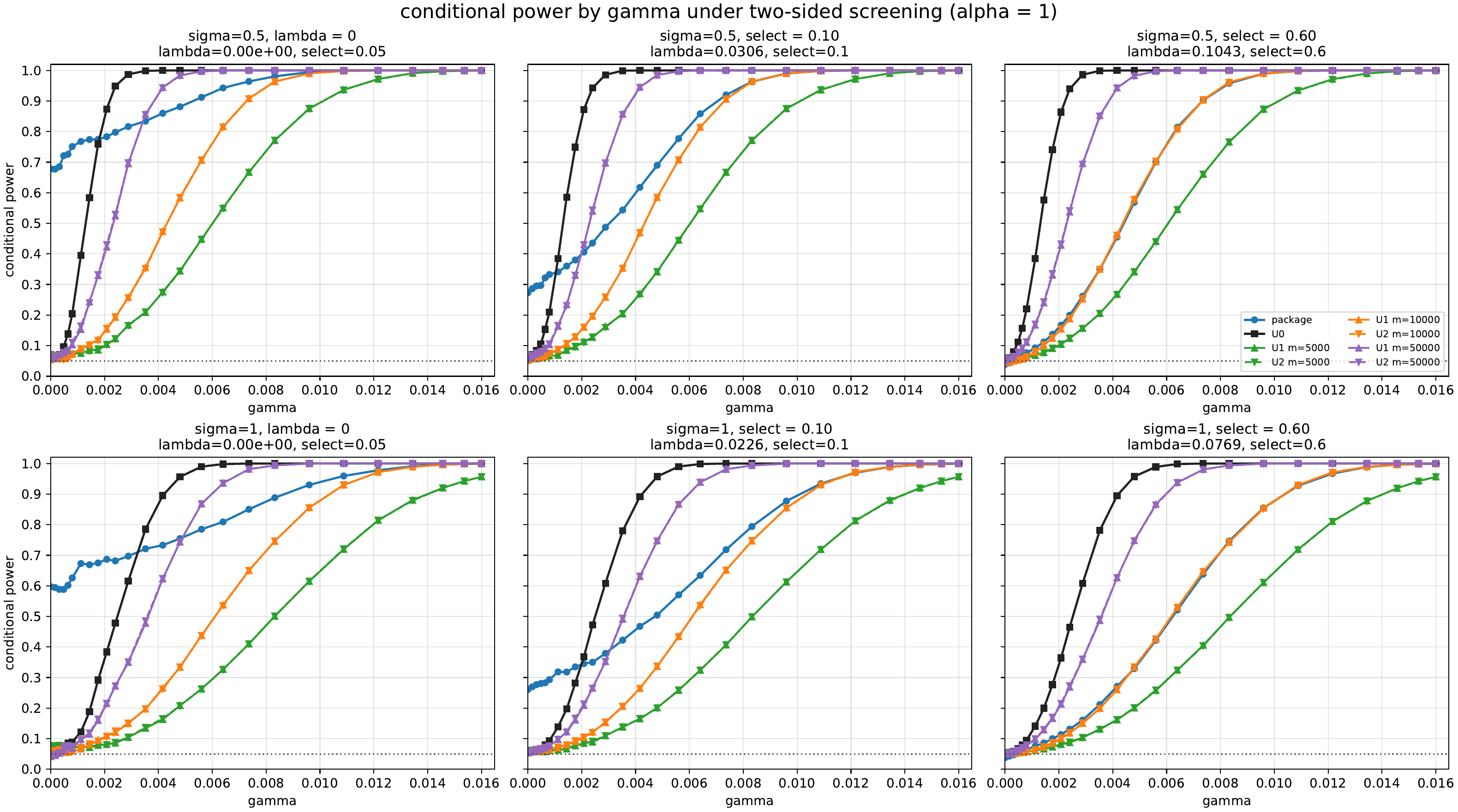}
\caption{Conditional power as a function of \(\gamma\) after two-sided observational association screening. The curves compare the package-style statistic, the oracle benchmark \(U_0\), the external-association benchmark \(U_1\), and the feasible decorrelated statistic \(U_2\) for different external sample sizes. The corrected statistics control the null behavior after screening, and their power improves when the external association estimate is sufficiently precise.}
\label{fig:conditional_power}
\end{figure}

\section{Variance Targets and Implementation Diagnostics}
\label{sec:comparison}

The feasible statistic in Section~\ref{sec:decorrelated} is built from the joint first-order covariance matrix
\[
\Omega=
\begin{pmatrix}
\omega_{11} & \omega_{12}\\
\omega_{12} & \omega_{22}
\end{pmatrix},
\]
where \(\omega_{11}\) is the asymptotic variance factor of the observational association estimator \(\hat\beta_{XY}\), \(\omega_{22}\) is the asymptotic variance factor of the MR estimator \(\hat\theta_n\), and \(\omega_{12}\) is their first-order covariance. These entries have different roles. The decorrelation coefficient is
\[
r=\frac{\omega_{12}}{\omega_{11}},
\]
so \(\omega_{11}\) directly determines whether the corrected numerator is first-order uncorrelated with the screening statistic. By contrast, \(\omega_{22}\) enters only through the variance factor
\[
V_1=\omega_{22}+\left(\frac{n}{m}-1\right)\frac{\omega_{12}^2}{\omega_{11}}.
\]
Thus \(\omega_{22}\) affects standardization, local power, and efficiency, but it does not determine the direction of the decorrelation. This section therefore separates the two issues. Subsection~\ref{subsec:omega11_interpretation} explains why \(\omega_{11}\) must be estimated as a robust random-design slope variance. Subsection~\ref{subsec:omega22_interpretation} explains why different choices of \(\omega_{22}\) can change power in opposite directions for positive and negative causal effects without changing the mechanism that removes post-screening dependence. The algebraic derivations and additional parameter sweeps are deferred to Appendix~\ref{app:heteroskedasticity} for \(\omega_{11}\), Appendix~\ref{app:wald-variance} for the single-SNP ratio variance, and Appendix~\ref{app:ivw-package-variance} for the IVW/package-variance comparison.

\subsection{The robust variance target for \texorpdfstring{\(\omega_{11}\)}{omega11}}
\label{subsec:omega11_interpretation}

Let \(X_c=X-\E(X)\), \(Y_c=Y-\E(Y)\), \(V_X=\Var(X)\), and
\[
\beta_{XY}=\frac{\Cov(X,Y)}{V_X},
\qquad
 e=Y_c-\beta_{XY}X_c.
\]
The observational association estimator \(\hat\beta_{XY}\) is the OLS slope from the random-design linear projection of \(Y\) on \(X\). Its first-order expansion is
\[
\sqrt n(\hat\beta_{XY}-\beta_{XY})
=\frac1{\sqrt n}\sum_{i=1}^n\frac{X_{ci}e_i}{V_X}+o_p(1),
\]
and therefore
\begin{equation}
\omega_{11}=\frac{\E(X_c^2e^2)}{V_X^2}.
\label{eq:omega11_robust_section5}
\end{equation}
This is the heteroskedasticity-robust, or sandwich, variance target for the random-design OLS slope \cite{Huber1967,White1980}. It is not generally equal to the classical homoskedastic regression variance
\begin{equation}
\omega_{11}^{\mathrm{cls}}=\frac{\Var(e)}{V_X}.
\label{eq:omega11_classical_section5}
\end{equation}
The exact difference is
\begin{equation}
\omega_{11}-\omega_{11}^{\mathrm{cls}}
=\frac{\Cov(X_c^2,e^2)}{V_X^2}.
\label{eq:omega11_cov_difference_section5}
\end{equation}
Thus the classical formula is correct only under additional conditions ensuring that the squared projection residual is uncorrelated with the squared exposure. Those conditions are not automatic in the structural model because both \(X\) and \(e\) contain components induced by the instruments and by the confounder.

This distinction matters for the proposed statistic. If the classical target is used in the decorrelation coefficient, the coefficient becomes \(r^{\mathrm{cls}}=\omega_{12}/\omega_{11}^{\mathrm{cls}}\). The first-order covariance left in the corrected numerator is then
\[
\omega_{12}-r^{\mathrm{cls}}\omega_{11}
=
\omega_{12}\left(1-\frac{\omega_{11}}{\omega_{11}^{\mathrm{cls}}}\right),
\]
which is nonzero whenever \(\omega_{11}\ne\omega_{11}^{\mathrm{cls}}\). Consequently, using the classical variance target can fail to remove the first-order dependence on the screening statistic.

The size of the discrepancy can be made explicit in a simple genotype model. In the single-SNP model with \(G\sim\mathrm{Bin}(2,p)\), independent centered normal \(U,\varepsilon_1,\varepsilon_2\), and \(q=p(1-p)\), Appendix~\ref{app:heteroskedasticity} gives
\begin{equation}
\frac{\omega_{11}-\omega_{11}^{\mathrm{cls}}}{\omega_{11}^{\mathrm{cls}}}
=
\frac{2\alpha^4\lambda^2\sigma_U^4q(1-6q)}
{V_X^2\{(\lambda^2\sigma_U^2+\sigma_{\varepsilon_2}^2)V_X-
\lambda^2\sigma_U^4\}}.
\label{eq:omega11_relative_difference_section5}
\end{equation}
The denominator in \eqref{eq:omega11_relative_difference_section5} is positive, so the sign is governed by \(q(1-6q)\). On the minor-allele-frequency scale \(p\in(0,1/2]\), the only nondegenerate zero is
\[
p_0=\frac{3-\sqrt3}{6}\approx0.2113.
\]
Therefore the robust variance is larger than the classical variance for \(0<p<p_0\), smaller for \(p_0<p\le1/2\), and equal at \(p=p_0\). The multi-SNP analogue replaces the single-SNP fourth-power term \(\alpha^4\) by a concentration measure such as \(\sum_j\alpha_j^4\); hence the discrepancy is not restricted to a single-SNP design, although it is diluted when the first-stage signal is spread evenly over many variants.

The parameter regimes that produce a large difference are not purely artificial. Low-frequency and rare variants fall in the region \(p<p_0\), and variants with minor allele frequency thresholds such as 1\% or 5\% are commonly retained or separately analyzed in modern GWAS pipelines \cite{Marees2018GWAS,Uffelmann2021GWAS}. The discrepancy is also amplified when the genetic contribution to \(X\) is strong relative to residual exposure noise and when confounding is substantial. These are precisely the settings in which MR investigators often prefer strong instruments and in which an association-screened workflow is most relevant. A single extremely strong rare variant is not representative of every MR study, but strong cis instruments, biomarker-associated variants, metabolite QTLs, pharmacogenetic variants, and concentrated genetic scores can make such parameter regimes plausible. Since the robust estimator in \eqref{eq:omega11_robust_section5} remains valid when the discrepancy is small, while the classical estimator is valid only under additional restrictions, the robust target is the appropriate default.

Figure~\ref{fig:omega11_diagnostics} summarizes the diagnostic simulations. The grid uses \(G\sim\mathrm{Bin}(2,p)\), \(R=10{,}000\) Monte Carlo replications, \(n=10{,}000\), \(m=5{,}000\), \(\lambda=2\), \(\alpha\in\{2,3\}\), and a dense grid of allele frequencies. The left panel shows that the relative discrepancy can be large at low allele frequencies and changes sign near \(p_0\). The right panel shows the practical consequence: the robust U0 and U2 implementations remain close to the nominal level, whereas replacing \(\omega_{11}\) by \(\omega_{11}^{\mathrm{cls}}\) can lead to severe rejection-rate inflation when the discrepancy is large. For example, in the most extreme reported grid point, the robust U2 rejection rate is about 0.054, whereas the classical U2 rejection rate is about 0.526. Even at \(p=0.05\) in a stress-test setting, the classical U2 rejection rate is about 0.141 while the robust U2 rejection rate remains close to 0.05.

\begin{figure}[htbp]
\centering
\begin{subfigure}{0.88\textwidth}
\centering
\safeincludegraphics[width=\textwidth]{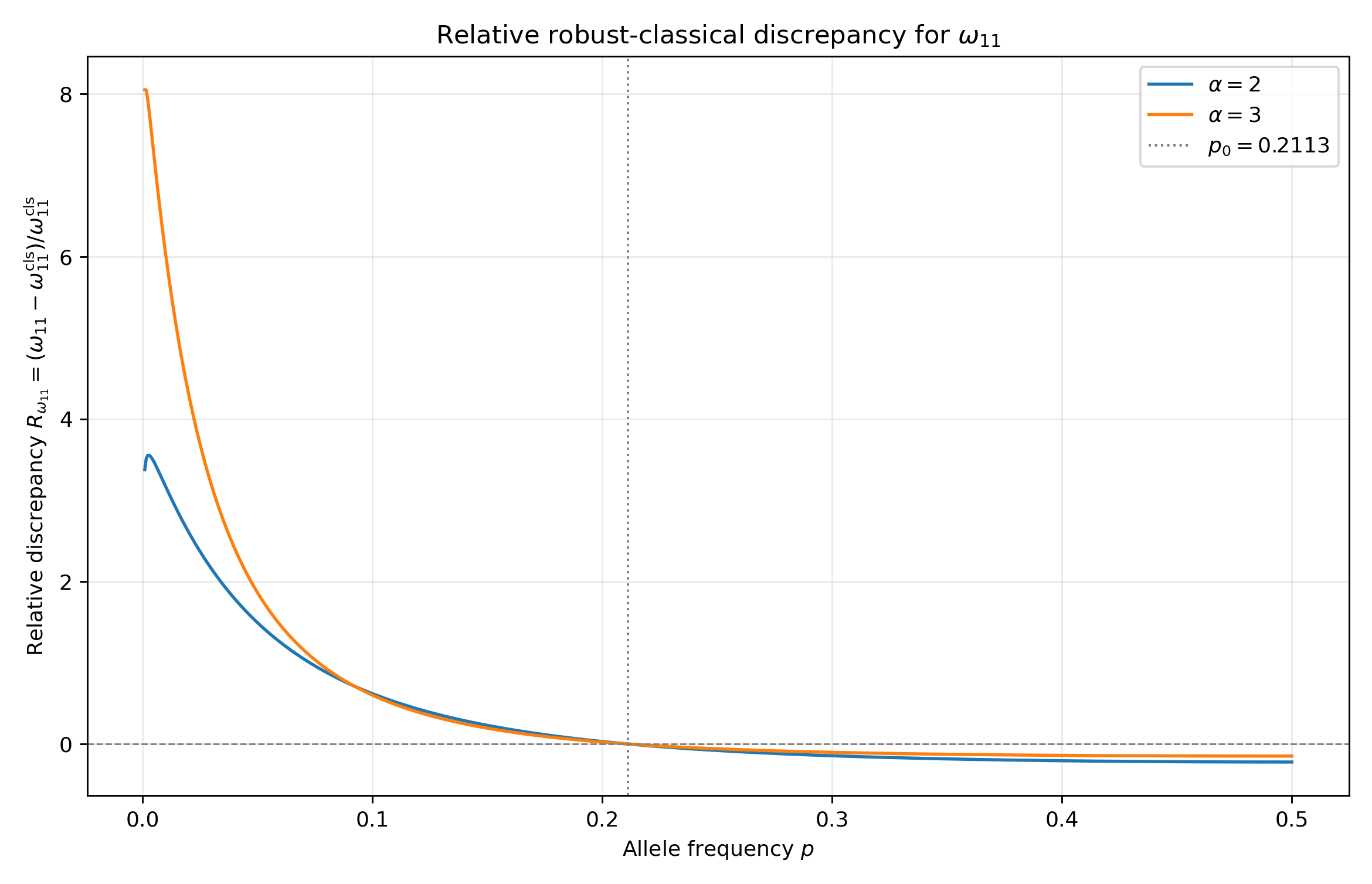}
\caption{Relative discrepancy \(R_{\omega_{11}}=(\omega_{11}-\omega_{11}^{\mathrm{cls}})/\omega_{11}^{\mathrm{cls}}\) as a function of allele frequency.}
\end{subfigure}
\vspace{0.8em}
\begin{subfigure}{0.88\textwidth}
\centering
\safeincludegraphics[width=\textwidth]{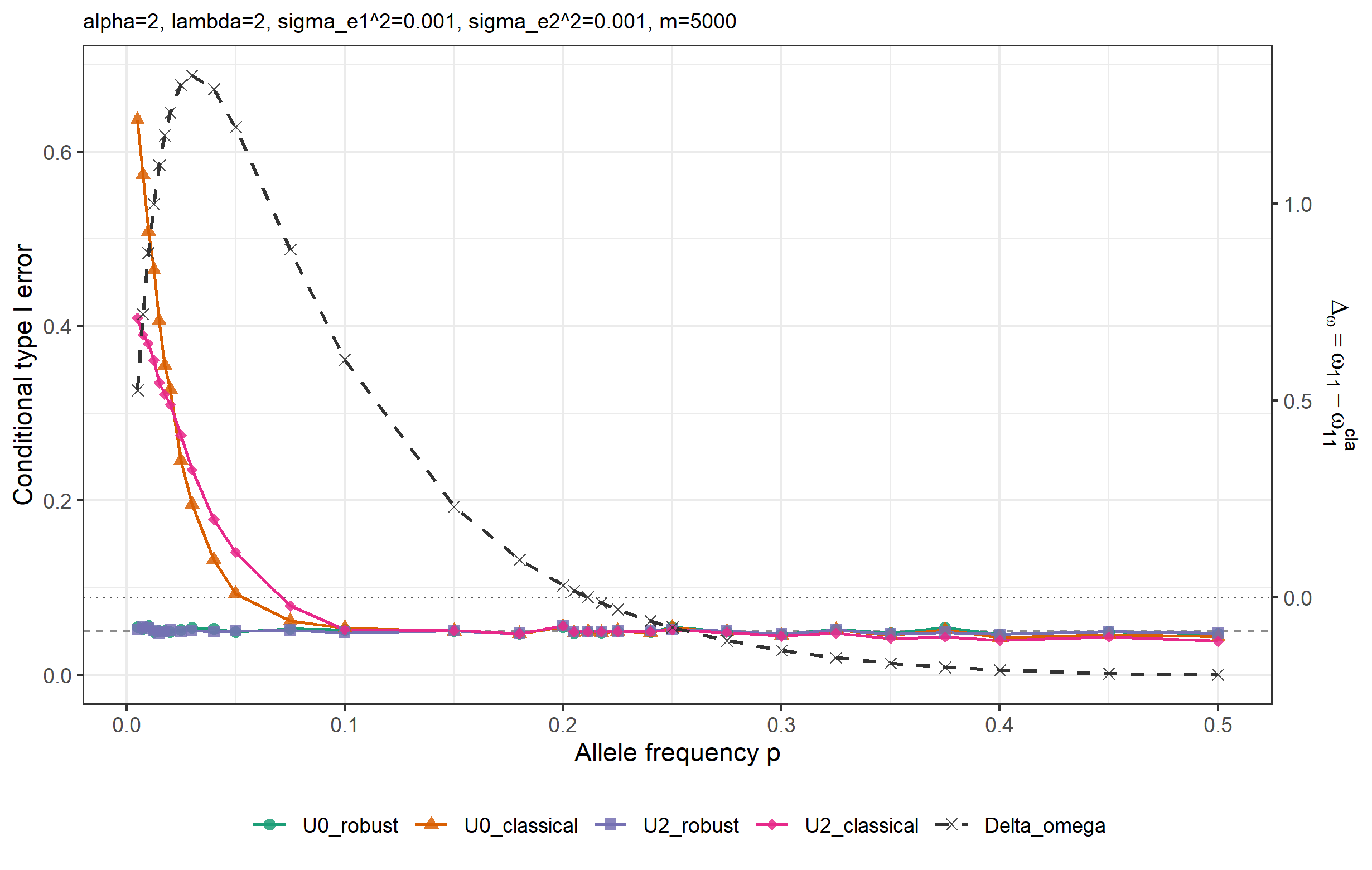}
\caption{Rejection probability of robust and classical implementations as the discrepancy changes.}
\end{subfigure}
\caption{Diagnostics for the \(\omega_{11}\) variance target. The robust target is the influence-function variance in \eqref{eq:omega11_robust_section5}; the classical target is \eqref{eq:omega11_classical_section5}. Large robust--classical discrepancies occur in parameter regimes that include low-frequency instruments and concentrated first-stage signal. In those regimes, using the classical target can make the proposed decorrelation ineffective, whereas the robust implementation remains calibrated.}
\label{fig:omega11_diagnostics}
\end{figure}

\subsection{The role of \texorpdfstring{\(\omega_{22}\)}{omega22}: standardization rather than decorrelation}
\label{subsec:omega22_interpretation}

The interpretation of \(\omega_{22}\) is different. It does not enter the decorrelation coefficient \(r=\omega_{12}/\omega_{11}\). Therefore an alternative estimate of \(\omega_{22}\) can change the scale and power of the statistic, but it does not determine whether the numerator has been made first-order uncorrelated with \(\hat\beta_{XY}\). This is why the \(\omega_{22}\) simulations should be read as variance and power diagnostics rather than as evidence about the validity of the decorrelation step.

For the single-SNP Wald ratio, the full delta-method variance includes the three usual ratio terms,
\begin{equation}
\Var(\hat\beta_{MR})
\approx
\frac{\Var(\hat\beta_{GY})}{\beta_{GX}^2}
+
\frac{\beta_{GY}^2}{\beta_{GX}^4}\Var(\hat\beta_{GX})
-
2\frac{\beta_{GY}}{\beta_{GX}^3}\Cov(\hat\beta_{GY},\hat\beta_{GX}),
\label{eq:wald_full_delta_section5}
\end{equation}
which is the standard delta-method expansion for a ratio estimator \cite{Fieller1954,Cochran1977,Oehlert1992}. Under the structural model, this gives
\begin{equation}
\omega_{22}^{\Delta}
=
\frac{\lambda^2\sigma_U^2+\sigma_{\varepsilon_2}^2}{\alpha^2\sigma_G^2}.
\label{eq:omega22_delta_single_section5}
\end{equation}
A denominator-fixed Wald standard error, of the type commonly used in summary-data Wald-ratio implementations, keeps only the first term in \eqref{eq:wald_full_delta_section5} \cite{Hemani2018MRBase,TwoSampleMRPackage,TwoSampleMRWaldRatio}. In the present model, its probability target is
\begin{equation}
\omega_{22}^{\mathrm{pkg}}(\gamma)
=
\frac{(\gamma+\lambda)^2\sigma_U^2+\gamma^2\sigma_{\varepsilon_1}^2+
\sigma_{\varepsilon_2}^2}{\alpha^2\sigma_G^2}.
\label{eq:omega22_pkg_single_section5}
\end{equation}
Therefore
\begin{equation}
\omega_{22}^{\mathrm{pkg}}(\gamma)-\omega_{22}^{\Delta}
=
\frac{2\gamma\lambda\sigma_U^2+
\gamma^2(\sigma_U^2+\sigma_{\varepsilon_1}^2)}
{\alpha^2\sigma_G^2}.
\label{eq:omega22_single_difference_section5}
\end{equation}
At \(\gamma=0\), the two variance targets agree. Away from the null, they have no fixed ordering. If \(\lambda>0\) and \(\gamma<0\) with moderate magnitude, the package-style variance is smaller; if \(\gamma>0\), it is larger. The same sign-reversal mechanism appears for fixed-\(K\) IVW: the influence-function variance and the package-style fixed-effect summary variance agree at the null in the structural model, but can differ in opposite directions under positive and negative alternatives. The detailed fixed-\(K\) formulas are given in Appendix~\ref{app:ivw-package-variance}.

Figure~\ref{fig:omega22_diagnostics} displays the final \(\omega_{22}\) power diagnostics for the feasible statistic \(U_2\). The single-SNP run uses \(p=0.5\), \(\alpha=1\), \(\lambda=1\), \(\sigma_U^2=\sigma_{\varepsilon_1}^2=\sigma_{\varepsilon_2}^2=1\), \(n=5{,}000\), and \(m\in\{2{,}500,5{,}000,25{,}000\}\). The multi-SNP run uses \(K=10\), common variants with \(p=0.5\), equal effects \(\alpha=0.7\), \(\lambda=1.5\), \(\sigma_U^2=4\), \(\sigma_{\varepsilon_1}^2=10\), \(\sigma_{\varepsilon_2}^2=1\), \(n=5{,}000\), and the same external sample sizes. These are diagnostic settings rather than attempts to represent every MR application. They are nevertheless practically interpretable: the instruments are common and the multi-SNP first-stage strength is strong but plausible for genome-wide significant instruments. The difference between the two \(\omega_{22}\) targets is modest in these final runs, but the direction is systematic and reverses with the sign of \(\gamma\) relative to \(\lambda\). Thus package-style \(\omega_{22}\) should not be interpreted as uniformly conservative or uniformly anti-conservative away from the null.

\begin{figure}[htbp]
\centering
\begin{subfigure}{0.98\textwidth}
\centering
\safeincludegraphics[width=\textwidth]{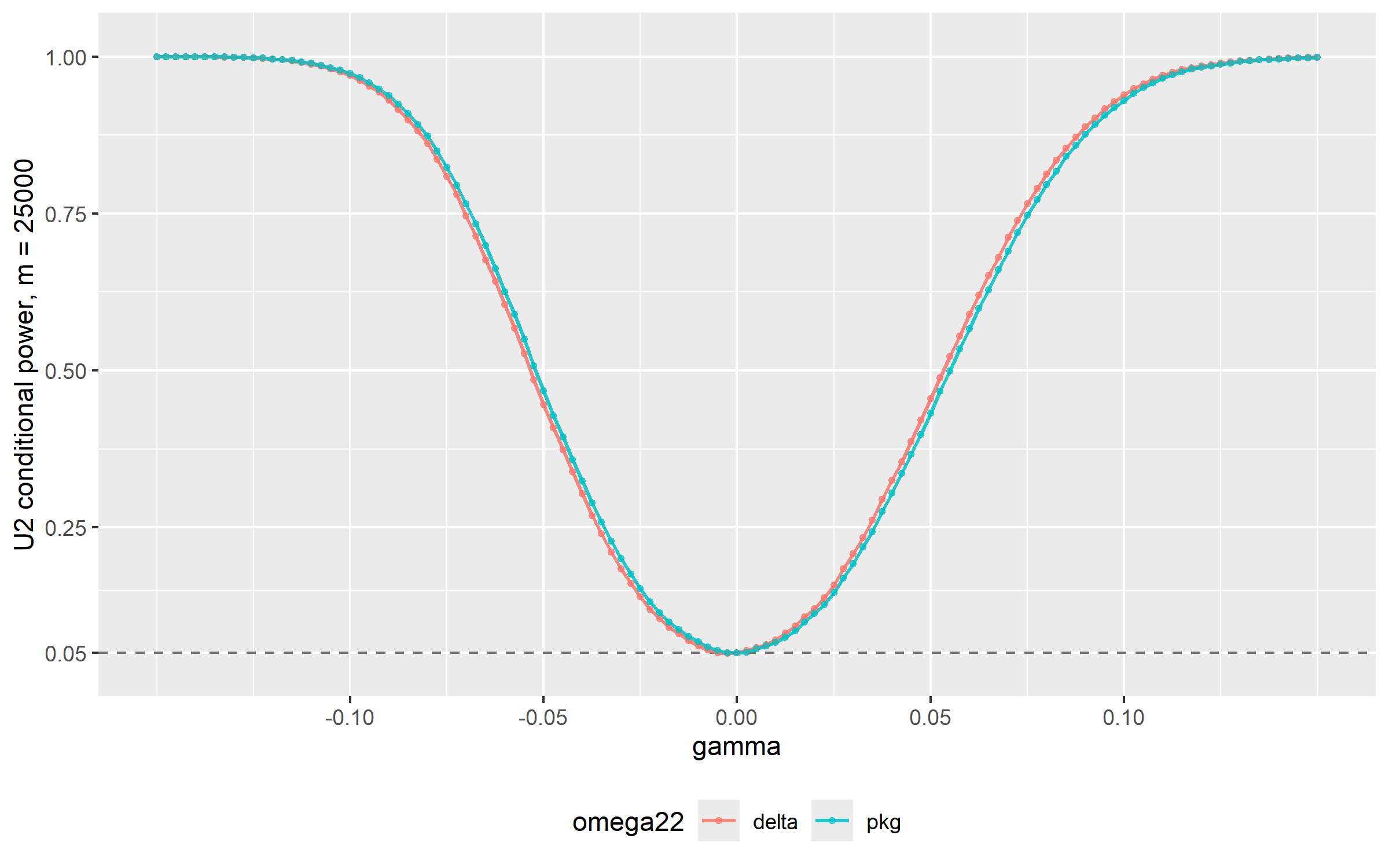}
\caption{Single-SNP U2.}
\end{subfigure}
\vspace{0.8em}
\begin{subfigure}{0.98\textwidth}
\centering
\safeincludegraphics[width=\textwidth]{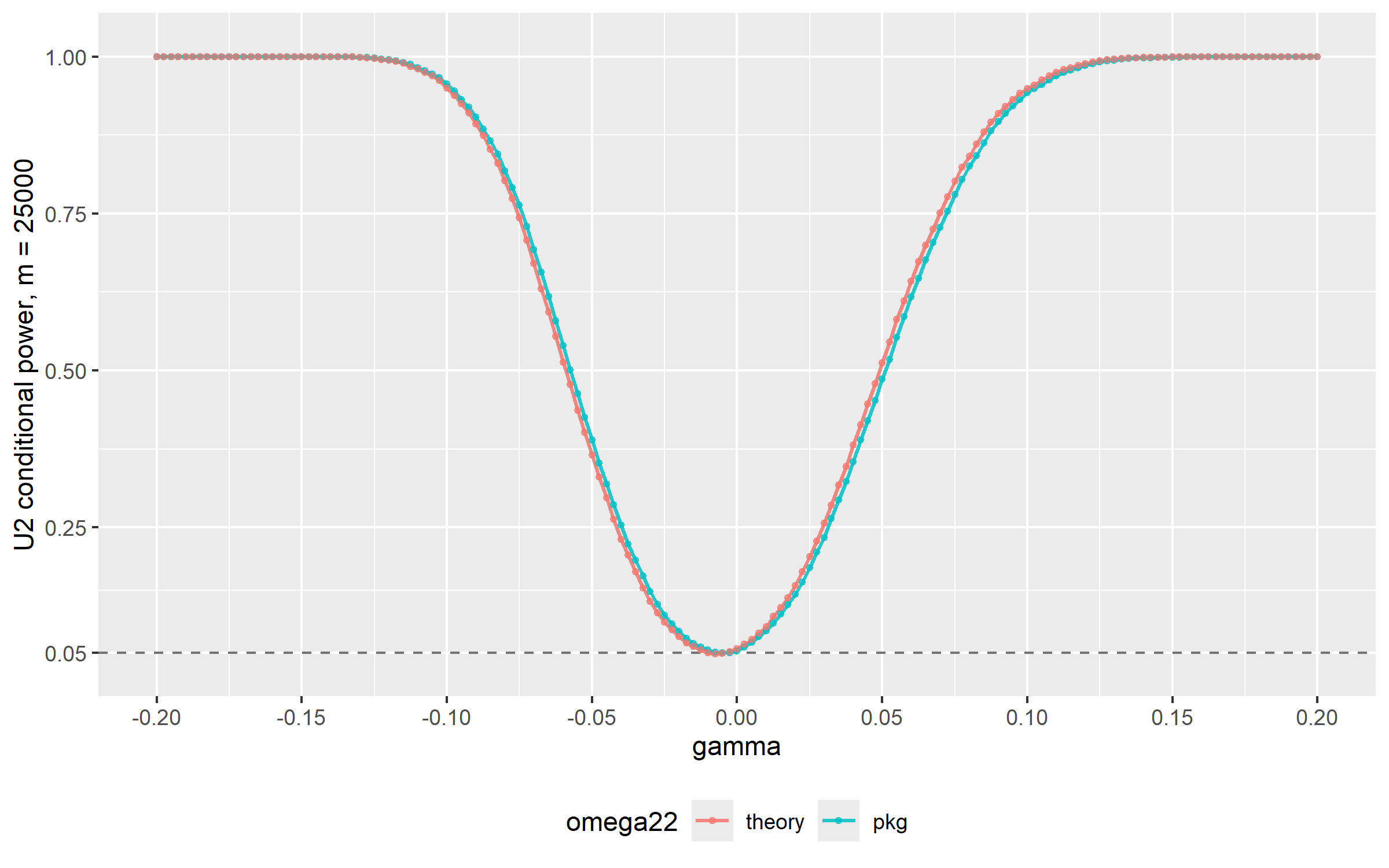}
\caption{Multi-SNP U2.}
\end{subfigure}
\caption{Effect of the \(\omega_{22}\) variance target on conditional power. The full first-order target and the package-style target agree at the null in these structural-model diagnostics, but differ away from the null. The sign of the difference reverses for negative versus positive causal effects when the confounding direction is fixed. This affects standardization and power, not the decorrelation coefficient.}
\label{fig:omega22_diagnostics}
\end{figure}

\subsection{Implementation recommendation}
\label{subsec:variance_recommendation}

The practical recommendation is therefore asymmetric. For \(\omega_{11}\), the robust influence-function variance is essential because \(\omega_{11}\) appears in the decorrelation coefficient itself. Replacing it by a classical homoskedastic target can leave first-order dependence between the screening statistic and the corrected numerator. For \(\omega_{22}\), the main issue is coherent standardization. A package-style Wald or IVW standard error may be reasonable for its original summary-data workflow, especially at the null where several targets coincide under the present structural model, but it generally estimates a different variance object away from the null.

Whenever individual-level data, or sufficient moment information, are available, the proposed implementation should estimate the full covariance matrix \(\Omega\) by the empirical influence-function covariance estimator in Section~\ref{sec:decorrelated}. This estimates \(\omega_{11}\), \(\omega_{12}\), and \(\omega_{22}\) as components of one joint first-order expansion. In applications where only summary-level package standard errors are available, the numerator decorrelation target \(\omega_{12}/\omega_{11}\) should be treated as the primary object needed for post-screening calibration, and the choice of \(\omega_{22}\) should be reported as a standardization and power sensitivity rather than as the mechanism that removes selection-induced type I error inflation.

\section{Discussion}
\label{sec:discussion}

This paper studies a source of invalid MR inference that arises from the workflow rather than from a violation of the usual instrumental-variable assumptions. In many exploratory analyses, MR is used only after an exposure--outcome pair has shown an observational association. The relevant null distribution is then the distribution of the MR statistic conditional on the association-screening event. The simulations, the bivariate Gaussian geometry, and the selected-density calculation all show the same phenomenon: a conventional MR statistic can be marginally calibrated and still be conditionally invalid after screening.

The proposed statistic targets this mechanism directly. The post-screening distortion is driven by the first-order covariance between the observational association estimator \(\hat\beta_{XY}\) and the MR estimator \(\hat\theta_n\). The feasible statistic \(U_2\) estimates and removes the component of \(\hat\theta_n\) that is linearly predictable from \(\hat\beta_{XY}\). Under the joint Gaussian first-order approximation, this decorrelation makes the corrected statistic asymptotically independent of screening events determined by \(\hat\beta_{XY}\). The procedure can therefore be understood as a selective-inference adjustment tailored to the MR follow-up workflow: validity is defined after the data-dependent decision to test has been made \cite{FithianSunTaylor2014,TaylorTibshirani2015,LeeSunSunTaylor2016}.

The external association estimate is central to the construction. It supplies an independent estimate of the population observational association \(\beta_{XY}\), allowing the statistic to remove the main-sample fluctuation that drives screening without reusing the same random quantity that defined the screening event. In applications, the required input is an independent exposure--outcome association estimate from a comparable population, possibly available as summary statistics from a previous association study. Standard SNP--exposure and SNP--outcome GWAS summary statistics alone do not provide this quantity unless they also include, or can be combined with, an independent estimate of the observational association between the same exposure and outcome.

The method is not intended to replace existing robustness tools for other MR problems. The theory assumes valid instruments under the structural model and focuses on post-screening calibration. Weak instruments, horizontal pleiotropy, linkage disequilibrium, many weak-instrument asymptotics, binary traits, nonlinear exposure models, and sample overlap between the main and external association studies are not addressed here. These issues are important but conceptually separate from the conditioning problem studied in this paper. In practice, post-screening calibration should be viewed as complementary to weak-instrument diagnostics and pleiotropy-robust MR methods.

The covariance target is also part of the methodology, not merely an implementation detail. In particular, \(\omega_{11}\) is the robust random-design variance of the observational OLS slope. Because \(\omega_{11}\) determines the decorrelation coefficient \(r=\omega_{12}/\omega_{11}\), using a classical homoskedastic slope variance can reintroduce the very dependence that the correction is designed to remove. Estimating \(\Omega\) through the empirical covariance of influence-function terms avoids this mismatch and keeps the numerator correction and denominator standardization tied to the same first-order expansion.

Several extensions are natural. One direction is to develop high-dimensional versions in which the number of genetic variants grows with sample size. Another is to combine the decorrelation idea with pleiotropy-robust estimators such as MR-Egger, weighted median, or mode-based procedures, each of which would require its own joint influence-function expansion with the screening statistic. A third direction is to model partial overlap between the main sample, the MR summary statistics, and the external exposure--outcome association study. Finally, binary outcomes and generalized exposure models would require replacing the OLS association and linear IVW influence functions by appropriate generalized-model analogues.

The main message is that a screen-then-MR workflow changes the inferential problem. When MR is used as a follow-up tool after observational association screening, ordinary MR \(p\)-values can overstate causal evidence because they are calibrated to the marginal null distribution rather than to the selected null distribution. A decorrelated statistic provides a direct way to calibrate causal testing to the screened workflow while retaining the familiar Wald-ratio and IVW estimators as starting points.

\appendix

\section{Supplement to Subsection~\ref{subsec:gaussian_geometry}: Selected Distribution after Association Screening}
\label{app:selected-density}

This appendix supplements the bivariate Gaussian geometry and selected-density calculation in Section~\ref{sec:problem}, especially Subsection~\ref{subsec:gaussian_geometry}. It derives the conditional density and the selected type I error formula used to explain why association screening can distort a marginally calibrated MR test.

Let
\[
W_1=\hat\beta_{XY},\qquad W_2=\hat\theta_n,
\]
where \(\hat\theta_n\) denotes the Wald ratio in the single-SNP case and the fixed-\(K\) IVW estimator in the multi-SNP case. Under the first-order Gaussian approximation established in Section~\ref{sec:joint},
\begin{equation}
\begin{pmatrix}W_1\\ W_2\end{pmatrix}
\approx
\Normal\left[
\begin{pmatrix}\beta_{XY}\\ \gamma\end{pmatrix},
\frac1n
\begin{pmatrix}
\omega_{11} & \omega_{12}\\
\omega_{12} & \omega_{22}
\end{pmatrix}
\right].
\label{eq:app-selected-joint-normal}
\end{equation}
Consider the two-sided association-screening event
\[
A_n=\{|W_1|>c_n\}.
\]
The conditional density of \(W_2\) given selection is obtained from the definition of conditional density:
\begin{equation}
 f_{W_2\mid A_n}(t)
 =
 \frac{\int_{|x|>c_n}f_{W_1,W_2}(x,t)\,dx}{\Prb(A_n)}.
\label{eq:app-selected-density-integral}
\end{equation}
Writing the joint density as \(f_{W_1\mid W_2}(x\mid t)f_{W_2}(t)\) gives the selection-weight representation
\begin{equation}
 f_{W_2\mid A_n}(t)
 =
 f_{W_2}(t)
 \frac{\Prb(|W_1|>c_n\mid W_2=t)}{\Prb(A_n)}.
\label{eq:app-selected-density-weight}
\end{equation}
Thus the selected density is the marginal density multiplied by the probability of passing the association screen at the realized value of the MR estimator.

By the bivariate normal conditioning formula applied to \eqref{eq:app-selected-joint-normal},
\begin{equation}
W_1\mid W_2=t
\approx
\Normal\left[
\beta_{XY}+\frac{\omega_{12}}{\omega_{22}}(t-\gamma),
\frac1n\left(\omega_{11}-\frac{\omega_{12}^2}{\omega_{22}}\right)
\right].
\label{eq:app-W1-given-W2}
\end{equation}
Consequently,
\begin{align}
\Prb(|W_1|>c_n\mid W_2=t)
&=
\Phi\left[
\frac{-c_n-\beta_{XY}-\omega_{12}(t-\gamma)/\omega_{22}}
{\{n^{-1}(\omega_{11}-\omega_{12}^2/\omega_{22})\}^{1/2}}
\right]
\notag\\
&\quad+
\Phi\left[
\frac{\beta_{XY}+\omega_{12}(t-\gamma)/\omega_{22}-c_n}
{\{n^{-1}(\omega_{11}-\omega_{12}^2/\omega_{22})\}^{1/2}}
\right],
\label{eq:app-selection-weight-W}
\end{align}
and
\begin{equation}
\Prb(A_n)=
\Phi\left(\frac{-c_n-\beta_{XY}}{\sqrt{\omega_{11}/n}}\right)
+
\Phi\left(\frac{\beta_{XY}-c_n}{\sqrt{\omega_{11}/n}}\right).
\label{eq:app-selection-prob-W}
\end{equation}
Equations~\eqref{eq:app-selected-density-weight}--\eqref{eq:app-selection-prob-W} give the selected density of the unstandardized MR estimator.

For the standardized form used in Section~\ref{sec:problem}, define
\[
Z_1=\frac{\sqrt n(W_1-\beta_{XY})}{\sqrt{\omega_{11}}},
\qquad
Z_2=\frac{\sqrt n(W_2-\gamma)}{\sqrt{\omega_{22}}},
\]
\[
\mu_n=\frac{\sqrt n\beta_{XY}}{\sqrt{\omega_{11}}},
\qquad
c=\frac{\sqrt n c_n}{\sqrt{\omega_{11}}},
\qquad
\rho=\frac{\omega_{12}}{\sqrt{\omega_{11}\omega_{22}}}.
\]
Then \((Z_1,Z_2)\) is approximately bivariate standard normal with correlation \(\rho\), and
\[
A_n=\{|Z_1+\mu_n|>c\}.
\]
Since
\[
Z_1\mid Z_2=z\sim \Normal(\rho z,1-\rho^2),
\]
the conditional probability of passing the screen is
\begin{equation}
s_{\mu,c,\rho}(z)
=
\Phi\left(\frac{-c-\mu-\rho z}{\sqrt{1-\rho^2}}\right)
+
\Phi\left(\frac{\mu+\rho z-c}{\sqrt{1-\rho^2}}\right),
\label{eq:app-selection-weight-standardized}
\end{equation}
and the marginal selection probability is
\begin{equation}
p_{\mu,c}
=
\Phi(-c-\mu)+\Phi(\mu-c).
\label{eq:app-selection-prob-standardized}
\end{equation}
Therefore,
\begin{equation}
f_{Z_2\mid A_n}(z)
=
\phi(z)\frac{s_{\mu_n,c,\rho}(z)}{p_{\mu_n,c}}.
\label{eq:app-selected-z-density}
\end{equation}
For a two-sided level-\(a\) MR test based on the uncorrected standard normal reference distribution,
\begin{equation}
\alpha_{\mathrm{sel}}(a)
=
\Prb(|Z_2|>z_{1-a/2}\mid A_n)
=
\int_{|z|>z_{1-a/2}}
\phi(z)\frac{s_{\mu_n,c,\rho}(z)}{p_{\mu_n,c}}\,dz.
\label{eq:app-selected-alpha}
\end{equation}
If \(\rho=0\), then \(s_{\mu,c,0}(z)=p_{\mu,c}\) for all \(z\), so the selected density in \eqref{eq:app-selected-z-density} is \(\phi(z)\) and the selected type I error in \eqref{eq:app-selected-alpha} equals the nominal level. If \(\rho\ne0\), the selection weight depends on \(z\), and the selected distribution of the MR statistic is tilted relative to its marginal distribution.

\section{Proofs of Lemma~\ref{lem:joint-linearization} and Theorems~\ref{thm:single}--\ref{thm:multi}: Joint Asymptotic Distributions}
\label{app:joint-asymptotics}

This appendix proves the joint asymptotic results stated in Section~\ref{sec:joint}. The proof uses only finite-dimensional empirical-moment theory: sample covariances are asymptotically linear averages of product variables, the multivariate central limit theorem gives a Gaussian limit for those averages, and the delta method transfers the limit to the covariance-ratio estimators. No normality assumption on \(G\), \(U\), \(\varepsilon_1\), or \(\varepsilon_2\) is used.

Under the zero-mean normalization of Section~\ref{sec:joint}, write
\[
\delta=Y-\gamma X=\lambda U+\varepsilon_2,
\qquad
\tau^2=\Var(\delta)=\lambda^2\sigma_U^2+\sigma_{\varepsilon_2}^2,
\]
\[
V_X=\Var(X),
\qquad
\kappa=\frac{\lambda\sigma_U^2}{V_X},
\qquad
\beta_{XY}=\gamma+\kappa,
\qquad
e=Y-\beta_{XY}X=\delta-\kappa X.
\]
The sample covariance satisfies, for any two variables \(A\) and \(B\) with finite second moments,
\begin{equation}
S_{AB}-\Cov(A,B)
=
\frac1n\sum_{i=1}^n\{A_iB_i-\E(AB)\}+o_p(n^{-1/2}).
\label{eq:app-sample-cov-linear}
\end{equation}
Indeed, under zero means, \(S_{AB}=n^{-1}\sum_i A_iB_i-\bar A\bar B\), and \(\bar A\bar B=O_p(n^{-1})\). In the nonzero-mean case, the same display holds with all variables centered at their population means. The finite-fourth-moment assumption ensures that all products appearing below have finite variances.

\subsection{Proof of Lemma~\ref{lem:joint-linearization}}
\label{app:linearization-proof}

For the observational association estimator, let \(A_n=S_{XY}\), \(B_n=S_{XX}\), \(A=\Cov(X,Y)\), and \(B=V_X\). The map \(f(a,b)=a/b\) has derivative \((1/B,-A/B^2)\) at \((A,B)\). Hence
\[
\hat\beta_{XY}-\beta_{XY}
=
\frac{1}{V_X}(S_{XY}-\Cov(X,Y))
-
\frac{\beta_{XY}}{V_X}(S_{XX}-V_X)
+o_p(n^{-1/2}).
\]
Using \eqref{eq:app-sample-cov-linear},
\[
\sqrt n(\hat\beta_{XY}-\beta_{XY})
=
\frac1{\sqrt n}\sum_{i=1}^n
\frac{X_iY_i-\beta_{XY}X_i^2}{V_X}+o_p(1)
=
\frac1{\sqrt n}\sum_{i=1}^n\phi_{1i}+o_p(1),
\]
where
\[
\phi_1=\frac{X(Y-\beta_{XY}X)}{V_X}=\frac{Xe}{V_X}.
\]
The expectation of \(\phi_1\) is zero because \(\E\{X(Y-\beta_{XY}X)\}=0\) by the population linear-projection definition of \(\beta_{XY}\).

In the single-SNP case,
\[
\hat\beta_{MR}=\frac{S_{GY}}{S_{GX}},
\qquad
\gamma=\frac{\Cov(G,Y)}{\Cov(G,X)}.
\]
Since \(\Cov(G,X)=\alpha\sigma_G^2\ne0\), the same ratio expansion gives
\[
\sqrt n(\hat\beta_{MR}-\gamma)
=
\frac1{\sqrt n}\sum_{i=1}^n
\frac{G_iY_i-\gamma G_iX_i}{\Cov(G,X)}+o_p(1)
=
\frac1{\sqrt n}\sum_{i=1}^n\phi^S_{2i}+o_p(1),
\]
where
\[
\phi_2^S=\frac{G(Y-\gamma X)}{\Cov(G,X)}=\frac{G\delta}{\alpha\sigma_G^2}.
\]

In the fixed-\(K\) multi-SNP case, note that
\[
\hat\beta_{Yj}-\gamma\hat\beta_{Xj}
=
\frac{S_{G_j(Y-\gamma X)}}{S_{G_jG_j}}
=
\frac{S_{G_j\delta}}{S_{G_jG_j}}.
\]
Therefore
\begin{align*}
\hat\beta_{IVW}-\gamma
&=
\frac{\sum_{j=1}^K\pi_{j,n}\hat\beta_{Xj}(\hat\beta_{Yj}-\gamma\hat\beta_{Xj})}
{\sum_{j=1}^K\pi_{j,n}\hat\beta_{Xj}^2} \\
&=
\frac{\sum_{j=1}^K\pi_{j,n}\hat\beta_{Xj}S_{G_j\delta}/S_{G_jG_j}}
{\sum_{j=1}^K\pi_{j,n}\hat\beta_{Xj}^2}.
\end{align*}
By the law of large numbers, \(\hat\beta_{Xj}\toP\alpha_j\), \(S_{G_jG_j}\toP\sigma_{G_j}^2\), and \(\pi_{j,n}\toP\pi_j\). Since \(K\) is fixed and \(S_{G_j\delta}=O_p(n^{-1/2})\), Slutsky's theorem yields
\[
\hat\beta_{IVW}-\gamma
=
\sum_{j=1}^K q_jS_{G_j\delta}+o_p(n^{-1/2}),
\qquad
q_j=
\frac{\pi_j\alpha_j/\sigma_{G_j}^2}{\sum_{\ell=1}^K\pi_\ell\alpha_\ell^2}.
\]
Using \eqref{eq:app-sample-cov-linear} again,
\[
\sqrt n(\hat\beta_{IVW}-\gamma)
=
\frac1{\sqrt n}\sum_{i=1}^n
\left(\sum_{j=1}^Kq_jG_{ji}\right)\delta_i+o_p(1).
\]
Thus the multi-SNP influence function is
\[
\phi_2^M=
\left(\sum_{j=1}^Kq_jG_j\right)(Y-\gamma X),
\]
which is the same expression as \eqref{eq:phi2_multi_compact} in Section~\ref{sec:joint}. Combining the first component with either \(\phi_2^S\) or \(\phi_2^M\) proves the asymptotically linear representation in Lemma~\ref{lem:joint-linearization}. The multivariate central limit theorem applied to \((\phi_1,\phi_2)\) gives the joint Gaussian limit with covariance matrix \(\Omega=\Var(\phi_1,\phi_2)^\top\).

\subsection{Proof of Theorem~\ref{thm:single}: single-SNP case}
\label{app:single-proof}

The single-SNP influence functions are
\[
\phi_1=\frac{X(\delta-\kappa X)}{V_X},
\qquad
\phi_2^S=\frac{G\delta}{\alpha\sigma_G^2}.
\]
The joint central limit theorem in Lemma~\ref{lem:joint-linearization} gives the convergence in \eqref{eq:single_joint_limit}; it remains only to compute the covariance entries.

For the first entry,
\[
\omega_{11}=\Var(\phi_1)=\frac{\E\{X^2(Y-\beta_{XY}X)^2\}}{V_X^2},
\]
because \(\E\{X(Y-\beta_{XY}X)\}=0\). This is the robust random-design OLS slope variance.

For the second entry,
\[
\omega_{22}^S
=\frac{\Var(G\delta)}{\alpha^2\sigma_G^4}.
\]
Since \(G\) is independent of \(\delta=\lambda U+\varepsilon_2\), both variables have mean zero, and \(\Var(\delta)=\tau^2\),
\[
\Var(G\delta)=\sigma_G^2\tau^2.
\]
Thus
\[
\omega_{22}^S=\frac{\tau^2}{\alpha^2\sigma_G^2}
=
\frac{\lambda^2\sigma_U^2+\sigma_{\varepsilon_2}^2}{\alpha^2\sigma_G^2}.
\]

Finally,
\begin{align*}
\omega_{12}
&=\Cov(\phi_1,\phi_2^S) \\
&=\frac1{V_X\alpha\sigma_G^2}
\left\{\E(XG\delta^2)-\kappa\E(X^2G\delta)\right\}.
\end{align*}
Because \(X=\alpha G+U+\varepsilon_1\) and \(G\) is independent of \((U,\varepsilon_1,\delta)\),
\[
\E(XG\delta^2)=\alpha\sigma_G^2\E(\delta^2)=\alpha\sigma_G^2\tau^2.
\]
Moreover,
\[
\E(X^2G\delta)=2\alpha\sigma_G^2\E\{(U+\varepsilon_1)\delta\}
=2\alpha\sigma_G^2\lambda\sigma_U^2.
\]
The terms involving \(\E(G^3)\) vanish because \(\E(\delta)=0\), and the terms involving \(\E(G)\) vanish because \(\E(G)=0\). Therefore
\[
\omega_{12}
=
\frac{\tau^2}{V_X}-\frac{2\kappa\lambda\sigma_U^2}{V_X}
=
\frac{\lambda^2\sigma_U^2+\sigma_{\varepsilon_2}^2}{V_X}
-
\frac{2\lambda^2\sigma_U^4}{V_X^2}.
\]
This proves Theorem~\ref{thm:single}.

\subsection{Proof of Theorem~\ref{thm:multi}: fixed-\texorpdfstring{\(K\)}{K} IVW case}
\label{app:multi-proof}

The multi-SNP influence functions are
\[
\phi_1=\frac{X(\delta-\kappa X)}{V_X},
\qquad
\phi_2^M=T_G\delta,
\qquad
T_G=\sum_{j=1}^K q_jG_j,
\]
where
\[
q_j=\frac{\pi_j\alpha_j/\sigma_{G_j}^2}{D_\pi},
\qquad
D_\pi=\sum_{\ell=1}^K\pi_\ell\alpha_\ell^2.
\]
Lemma~\ref{lem:joint-linearization} gives the joint central limit theorem. The first variance entry is again
\[
\omega_{11}=\frac{\E\{X^2(Y-\beta_{XY}X)^2\}}{V_X^2}.
\]

For \(\omega_{22}^M\), \(T_G\) depends only on the instruments and is independent of \(\delta\). Since \(\E(T_G)=0\) and \(\E(\delta)=0\),
\[
\omega_{22}^M=\Var(T_G\delta)=\Var(T_G)\Var(\delta).
\]
Pairwise uncorrelated SNPs give
\[
\Var(T_G)=\sum_{j=1}^K q_j^2\sigma_{G_j}^2
=
\frac{\sum_{j=1}^K\pi_j^2\alpha_j^2/\sigma_{G_j}^2}
{\left(\sum_{\ell=1}^K\pi_\ell\alpha_\ell^2\right)^2}.
\]
Thus
\[
\omega_{22}^M
=
(\lambda^2\sigma_U^2+\sigma_{\varepsilon_2}^2)
\frac{\sum_{j=1}^K\pi_j^2\alpha_j^2/\sigma_{G_j}^2}
{\left(\sum_{\ell=1}^K\pi_\ell\alpha_\ell^2\right)^2}.
\]

For the cross-covariance, write \(A=\sum_{j=1}^K\alpha_jG_j\), so that \(X=A+U+\varepsilon_1\). Since
\[
\sum_{j=1}^K \alpha_jq_j\sigma_{G_j}^2
=
\frac{\sum_{j=1}^K\pi_j\alpha_j^2}{D_\pi}=1,
\]
we have
\[
\E(XT_G\delta^2)=\E(AT_G)\E(\delta^2)=\tau^2.
\]
Similarly,
\[
\E(X^2T_G\delta)=2\E(AT_G)\E\{(U+\varepsilon_1)\delta\}=2\lambda\sigma_U^2.
\]
It follows that
\[
\omega_{12}
=\frac1{V_X}\{\tau^2-2\kappa\lambda\sigma_U^2\}
=
\frac{\lambda^2\sigma_U^2+\sigma_{\varepsilon_2}^2}{V_X}
-
\frac{2\lambda^2\sigma_U^4}{V_X^2}.
\]
This proves Theorem~\ref{thm:multi}.

\section{Proof of Proposition~\ref{prop:plugin_consistency}}
\label{app:omega-estimation}

This appendix proves Proposition~\ref{prop:plugin_consistency}. It supplements Subsection~\ref{subsec:plugin_estimation}, where the feasible estimator \(\hat\Omega\) in \eqref{eq:omega_hat_if} is introduced.

Let \(\phi_i=(\phi_{1i},\phi_{2i})^\top\) denote the population influence-function vector in the relevant single-SNP or multi-SNP setting, and let \(\hat\phi_i=(\hat\phi_{1i},\hat\phi_{2i})^\top\) be the sample analogue defined in \eqref{eq:phi1_hat}--\eqref{eq:phi2_multi_hat}. The proof uses the following elementary lemma.

\begin{lemma}[Empirical covariance with estimated influence functions]
\label{lem:estimated-if-cov}
Suppose \(\E\|\phi_i\|^2<\infty\) and
\[
\frac1n\sum_{i=1}^n\|\hat\phi_i-\phi_i\|^2=o_p(1).
\]
Then
\[
\frac1n\sum_{i=1}^n(\hat\phi_i-\bar{\hat\phi})(\hat\phi_i-\bar{\hat\phi})^\top
\toP
\E(\phi_i\phi_i^\top)=\Omega.
\]
\end{lemma}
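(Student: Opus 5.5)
The plan is to compare the empirical covariance built from $\hat\phi_i$ with the oracle empirical covariance built from the population influence functions $\phi_i$, and to show the difference vanishes in probability. First, since the $\phi_i$ are i.i.d.\ with $\E\|\phi_i\|^2<\infty$ and $\E\phi_i=0$ (as noted in Lemma~\ref{lem:joint-linearization} and its proof), the weak law of large numbers gives
\[
\frac1n\sum_{i=1}^n\phi_i\phi_i^\top\toP\E(\phi_i\phi_i^\top)=\Omega,
\qquad
\bar\phi=\frac1n\sum_{i=1}^n\phi_i\toP0.
\]
Hence the oracle centered covariance $n^{-1}\sum_i(\phi_i-\bar\phi)(\phi_i-\bar\phi)^\top=n^{-1}\sum_i\phi_i\phi_i^\top-\bar\phi\bar\phi^\top$ also converges to $\Omega$.

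Next, I will write $\Delta_i=\hat\phi_i-\phi_i$ and expand
\[
\frac1n\sum_i\hat\phi_i\hat\phi_i^\top-\frac1n\sum_i\phi_i\phi_i^\top
=\frac1n\sum_i\bigl(\phi_i\Delta_i^\top+\Delta_i\phi_i^\top\bigr)+\frac1n\sum_i\Delta_i\Delta_i^\top .
\]
The last term has operator (or Frobenius) norm at most $n^{-1}\sum_i\|\Delta_i\|^2=o_p(1)$ by hypothesis. For the cross terms, I will apply Cauchy--Schwarz:
\[
\Bigl\|\frac1n\sum_i\phi_i\Delta_i^\top\Bigr\|\le\Bigl(\frac1n\sum_i\|\phi_i\|^2\Bigr)^{1/2}\Bigl(\frac1n\sum_i\|\Delta_i\|^2\Bigr)^{1/2}=O_p(1)\,o_p(1)=o_p(1),
\]
where the $O_p(1)$ bound again comes from the law of large numbers applied to $\|\phi_i\|^2$. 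Thus $n^{-1}\sum_i\hat\phi_i\hat\phi_i^\top\toP\Omega$.

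For the centering, I note that $\|\bar{\hat\phi}-\bar\phi\|\le n^{-1}\sum_i\|\Delta_i\|\le(n^{-1}\sum_i\|\Delta_i\|^2)^{1/2}=o_p(1)$ by Jensen's inequality, so $\bar{\hat\phi}\toP0$ and $\bar{\hat\phi}\bar{\hat\phi}^\top=o_p(1)$. Since
\[
n^{-1}\sum_i(\hat\phi_i-\bar{\hat\phi})(\hat\phi_i-\bar{\hat\phi})^\top=n^{-1}\sum_i\hat\phi_i\hat\phi_i^\top-\bar{\hat\phi}\bar{\hat\phi}^\top,
\]
the conclusion follows. The argument is entirely elementary. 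The only point needing care is the Cauchy--Schwarz control of the cross term, which is exactly where the mean-square closeness hypothesis substitutes for any pointwise or uniform convergence of $\hat\phi_i$ to $\phi_i$. I expect no genuine obstacle here; the real work is deferred to verifying the hypothesis $n^{-1}\sum_i\|\hat\phi_i-\phi_i\|^2=o_p(1)$ for the specific sample analogues in the proof of Proposition~\ref{prop:plugin_consistency}.
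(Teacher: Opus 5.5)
Your proof is correct and follows essentially the same route as the paper. Both arguments use Cauchy--Schwarz to compare the uncentered empirical second moments of $\hat\phi_i$ and $\phi_i$, the law of large numbers for $n^{-1}\sum_i\phi_i\phi_i^\top$, and $\bar{\hat\phi}=o_p(1)$ (via $\E\phi_i=0$) to handle the centering. The one cosmetic difference is that you expand $\hat\phi_i\hat\phi_i^\top-\phi_i\phi_i^\top$ as $\phi_i\Delta_i^\top+\Delta_i\phi_i^\top+\Delta_i\Delta_i^\top$ rather than $\Delta_i\hat\phi_i^\top+\phi_i\Delta_i^\top$, which spares you the separate $O_p(1)$ bound on $n^{-1}\sum_i\|\hat\phi_i\|^2$ that the paper needs.
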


\begin{proof}
First,
\[
\begin{aligned}
&\left\|
\frac1n\sum_{i=1}^n\hat\phi_i\hat\phi_i^\top-
\frac1n\sum_{i=1}^n\phi_i\phi_i^\top
\right\| \\
&\quad\le
\left(\frac1n\sum_{i=1}^n\|\hat\phi_i-\phi_i\|^2\right)^{1/2}
\left(\frac1n\sum_{i=1}^n\|\hat\phi_i\|^2\right)^{1/2} \\
&\qquad+
\left(\frac1n\sum_{i=1}^n\|\phi_i\|^2\right)^{1/2}
\left(\frac1n\sum_{i=1}^n\|\hat\phi_i-\phi_i\|^2\right)^{1/2}.
\end{aligned}
\]
The first factor involving \(\hat\phi_i-\phi_i\) is \(o_p(1)\); the empirical second moments of \(\phi_i\) are \(O_p(1)\) by the law of large numbers, and those of \(\hat\phi_i\) are also \(O_p(1)\) by the displayed mean-square convergence. Hence the difference of uncentered empirical second moments is \(o_p(1)\). The law of large numbers gives
\[
\frac1n\sum_{i=1}^n\phi_i\phi_i^\top\toP \E(\phi_i\phi_i^\top)=\Omega.
\]
Finally, \(\bar{\hat\phi}=n^{-1}\sum_i\hat\phi_i=o_p(1)\), because \(n^{-1}\sum_i\phi_i=o_p(1)\) and \(n^{-1}\sum_i(\hat\phi_i-\phi_i)=o_p(1)\) by Cauchy's inequality. Empirical centering therefore changes the covariance estimator by \(\bar{\hat\phi}\bar{\hat\phi}^{\top}=o_p(1)\). This proves the lemma.
\end{proof}

It remains to verify the mean-square convergence condition. In the single-SNP case, the probability limits
\[
S_{XX}\toP V_X,
\qquad
S_{GX}\toP \Cov(G,X),
\qquad
\hat\beta_{XY}\toP\beta_{XY},
\qquad
\hat\beta_{MR}\toP\gamma
\]
follow from the law of large numbers, relevance, and the continuous mapping theorem. The finite-fourth-moment condition implies that the empirical second moments of the relevant products are tight. Therefore the sample analogues \eqref{eq:phi1_hat} and \eqref{eq:phi2_single_hat} converge in empirical mean square to \(\phi_1\) and \(\phi_2^S\). Lemma~\ref{lem:estimated-if-cov} then gives \(\hat\Omega\toP\Omega\).

In the multi-SNP case, the same argument applies to \(\hat\phi_{1i}\). For the second component, \(K\) is fixed and
\[
\hat\beta_{Xj}\toP\alpha_j,
\qquad
S_{G_jG_j}\toP\sigma_{G_j}^2,
\qquad
\pi_{j,n}\toP\pi_j,
\qquad j=1,\ldots,K.
\]
Thus \(\hat q_j\toP q_j\) for each \(j\), and \(\hat\beta_{IVW}\toP\gamma\). The finite-fourth-moment condition again gives empirical mean-square convergence of \(\hat\phi_{2i}\) to \(\phi_2^M\). Lemma~\ref{lem:estimated-if-cov} yields \(\hat\Omega\toP\Omega\). Since \(\omega_{11}>0\) and \(V_1>0\), the continuous mapping theorem gives
\[
\hat r=\frac{\hat\omega_{12}}{\hat\omega_{11}}\toP r,
\qquad
\hat V_1
=
\hat\omega_{22}+\left(\frac nm-1\right)\frac{\hat\omega_{12}^2}{\hat\omega_{11}}
\toP V_1,
\]
provided \(n/m\to\eta\in(0,\infty)\). This proves Proposition~\ref{prop:plugin_consistency}.

\section{Proofs of Theorems~\ref{thm:U0}, \ref{thm:U1}, and~\ref{thm:U2}: Decorrelated Statistics}
\label{app:decorrelated}

This appendix proves the three decorrelation results in Section~\ref{sec:decorrelated}. The proof is written in the unified notation \(\hat\theta_n=\hat\beta_{MR}\) for one SNP and \(\hat\theta_n=\hat\beta_{IVW}\) for fixed \(K\) SNPs. By Section~\ref{sec:joint},
\begin{equation}
\sqrt n
\begin{pmatrix}
\hat\beta_{XY}-\beta_{XY}\\
\hat\theta_n-\gamma
\end{pmatrix}
\toD
\Normal\left\{
0,
\begin{pmatrix}
\omega_{11} & \omega_{12}\\
\omega_{12} & \omega_{22}
\end{pmatrix}
\right\}.
\label{eq:app-decor-joint-raw}
\end{equation}
where the symbol \(\Normal\{0,\Omega\}\) in \eqref{eq:app-decor-joint-raw} means a centered bivariate normal distribution with covariance matrix \(\Omega\).  Assume \(\Omega\) is positive definite and define
\[
r=\frac{\omega_{12}}{\omega_{11}},
\qquad
V_0=\omega_{22}-\frac{\omega_{12}^2}{\omega_{11}}>0.
\]
For readability, set
\[
A_n^*=\sqrt n(\hat\beta_{XY}-\beta_{XY}),
\qquad
B_n=\sqrt n(\hat\theta_n-\gamma).
\]

\subsection{Proof of Theorem~\ref{thm:U0}: oracle statistic}
\label{app:proof-U0}

The oracle numerator has the centered form \(B_n-rA_n^*\). By the continuous mapping theorem,
\[
\begin{pmatrix}A_n^*\\ B_n-rA_n^*\end{pmatrix}
\toD
\Normal\left[
0,
\begin{pmatrix}
\omega_{11} & \omega_{12}-r\omega_{11}\\
\omega_{12}-r\omega_{11} & \omega_{22}-2r\omega_{12}+r^2\omega_{11}
\end{pmatrix}
\right].
\]
Because \(r=\omega_{12}/\omega_{11}\), the off-diagonal entry is zero and the second diagonal entry is \(V_0\). Hence \(B_n-rA_n^*\toD\Normal(0,V_0)\) under the null, and
\[
U_0=\frac{B_n-rA_n^*}{\sqrt{V_0}}+o_p(1)\toD\Normal(0,1).
\]
Under the local alternative \(\gamma=\gamma_0+\Delta/\sqrt n\), the numerator equals
\[
\sqrt n\{\hat\theta_n-\gamma-r(\hat\beta_{XY}-\beta_{XY})\}+\Delta,
\]
so the limit is \(\Normal(\Delta/\sqrt{V_0},1)\). The post-selection statement follows from the last subsection of this appendix. This proves Theorem~\ref{thm:U0}.

\subsection{Proof of Theorem~\ref{thm:U1}: external-association statistic}
\label{app:proof-U1}

Let \(\hat\beta_{XY}^{(s)}\) be independent of the main sample and satisfy
\[
\sqrt m(\hat\beta_{XY}^{(s)}-\beta_{XY})\toD\Normal(0,\omega_{11}),
\qquad
n/m\to\eta\in(0,\infty).
\]
Let \(C_m=\sqrt m(\hat\beta_{XY}^{(s)}-\beta_{XY})\). Then \(C_m\) is asymptotically independent of \((A_n^*,B_n)\). The centered numerator of \(U_1\), multiplied by \(\sqrt n\), is
\[
B_n-rA_n^*+r\sqrt{\frac{n}{m}}C_m.
\]
The first two terms have variance \(V_0\), the external term has variance \(r^2\eta\omega_{11}\), and the external term is independent of the main-sample terms. Therefore the limiting variance is
\[
V_0+r^2\eta\omega_{11}
=
\omega_{22}-\frac{\omega_{12}^2}{\omega_{11}}
+
\eta\frac{\omega_{12}^2}{\omega_{11}}
=
\omega_{22}+(\eta-1)\frac{\omega_{12}^2}{\omega_{11}},
\]
which is the limit of \(V_1\). Hence \(U_1\toD\Normal(0,1)\) under \(H_0^\gamma\), and under \(\gamma=\gamma_0+\Delta/\sqrt n\) the mean shifts by \(\Delta/\sqrt{V_1}\). Moreover, the covariance between the limiting association coordinate and the limiting numerator is
\[
\Cov\{A,B-rA+r\sqrt\eta C\}
=\omega_{12}-r\omega_{11}=0,
\]
because the external limit \(C\) is independent of \(A\). The post-selection statement follows from the last subsection. This proves Theorem~\ref{thm:U1}.

\subsection{Proof of Theorem~\ref{thm:U2}: feasible statistic}
\label{app:proof-U2}

By Proposition~\ref{prop:plugin_consistency}, \(\hat r\toP r\) and \(\hat V_1\toP V_1\). Let
\[
D_n=\hat\beta_{XY}-\hat\beta_{XY}^{(s)}.
\]
Since \(\hat\beta_{XY}-\beta_{XY}=O_p(n^{-1/2})\), \(\hat\beta_{XY}^{(s)}-\beta_{XY}=O_p(m^{-1/2})\), and \(n/m\to\eta\in(0,\infty)\), we have \(D_n=O_p(n^{-1/2})\). The difference between the feasible and external-association numerators is
\[
-(\hat r-r)D_n.
\]
After multiplication by \(\sqrt n\), this difference is
\[
-(\hat r-r)\sqrt nD_n=o_p(1)O_p(1)=o_p(1).
\]
The denominator ratio satisfies \(\sqrt{\hat V_1/V_1}\toP1\). Thus
\[
U_2-U_1=o_p(1).
\]
Slutsky's theorem gives the null and local-alternative limits stated in Theorem~\ref{thm:U2}. The post-selection statement follows from the next subsection.

\subsection{Conditional post-selection implication}
\label{app:post-selection-implication}

Theorems~\ref{thm:U0}--\ref{thm:U2} condition on screening events determined by the association statistic. The precise argument is as follows. Let
\[
S_n=\frac{\sqrt n(\hat\beta_{XY}-\beta_{XY})}{\sqrt{\omega_{11}}}.
\]
For \(j=0,1,2\), the preceding proofs imply
\[
(S_n,U_j)\toD (S,U),
\qquad
(S,U)\sim\Normal\left(
\begin{pmatrix}0\\0\end{pmatrix},
\begin{pmatrix}1&0\\0&1\end{pmatrix}
\right)
\]
under the null, with the usual interpretation that \(U_2-U_1=o_p(1)\). Let \(A_n\) be a screening event whose limiting form is \(\{S_n\in\mathcal A\}\), where \(\Prb(S\in\partial\mathcal A)=0\) and \(\Prb(S\in\mathcal A)>0\). Then, for any continuity set \(B\) of the standard normal law,
\[
\Prb(U_j\in B\mid A_n)
=
\frac{\Prb(U_j\in B,A_n)}{\Prb(A_n)}
\longrightarrow
\frac{\Prb(U\in B,S\in\mathcal A)}{\Prb(S\in\mathcal A)}.
\]
Since \(S\) and \(U\) are independent in the limiting Gaussian experiment, the last ratio equals \(\Prb(U\in B)\). Therefore
\[
\mathcal L(U_j\mid A_n)\Rightarrow \Normal(0,1),
\qquad j=0,1,2,
\]
which proves the post-screening validity statements.

\section{Supplement to Subsection~\ref{subsec:omega11_interpretation}: Heteroskedasticity of the Observational Regression and the Role of \texorpdfstring{\(\omega_{11}\)}{omega11}}
\label{app:heteroskedasticity}

This appendix supplements Subsection~\ref{subsec:omega11_interpretation}. It gives the derivations behind the robust variance target for \(\omega_{11}\), its difference from the classical homoskedastic variance, and the allele-frequency calculation used in the \(\omega_{11}\) diagnostic simulations.

Let
\[
X_c=X-\E(X),
\qquad
Y_c=Y-\E(Y),
\qquad
V_X=\Var(X),
\qquad
\beta_{XY}=\frac{\Cov(X,Y)}{V_X},
\]
and define the population linear-projection residual
\[
e=Y_c-\beta_{XY}X_c.
\]
The population OLS orthogonality condition is \(\E(X_ce)=0\). The random-design OLS expansion gives
\[
\sqrt n(\hat\beta_{XY}-\beta_{XY})
=
\frac1{\sqrt n}\sum_{i=1}^n\frac{X_{ci}e_i}{V_X}+o_p(1),
\]
so
\begin{equation}
\omega_{11}=\frac{\E(X_c^2e^2)}{V_X^2}.
\label{eq:app-omega11-robust}
\end{equation}
The classical homoskedastic target is
\begin{equation}
\omega_{11}^{\mathrm{cls}}=\frac{\Var(e)}{V_X}.
\label{eq:app-omega11-classical}
\end{equation}
Since \(\E(X_c^2)=V_X\), the exact difference is
\begin{equation}
\omega_{11}-\omega_{11}^{\mathrm{cls}}
=
\frac{\E(X_c^2e^2)-V_X\E(e^2)}{V_X^2}
=
\frac{\Cov(X_c^2,e^2)}{V_X^2}.
\label{eq:app-omega11-difference-cov}
\end{equation}
Thus the classical variance is appropriate only when the squared projection residual is uncorrelated with the squared exposure.

Under the single-SNP structural model, write
\[
X_c=\alpha G_c+U_c+\varepsilon_1,
\qquad
\delta=Y_c-\gamma X_c=\lambda U_c+\varepsilon_2,
\qquad
\kappa=\frac{\lambda\sigma_U^2}{V_X}.
\]
Then \(\beta_{XY}=\gamma+\kappa\) and
\[
e=\delta-\kappa X_c
=-\kappa\alpha G_c+(\lambda-\kappa)U_c-\kappa\varepsilon_1+\varepsilon_2.
\]
For independent centered components \(Z_l\), the fourth-moment identity
\[
\Cov\left\{\left(\sum_l a_lZ_l\right)^2,
\left(\sum_l b_lZ_l\right)^2\right\}
=
2\left(\sum_l a_lb_l\Var(Z_l)\right)^2
+
\sum_l a_l^2b_l^2\kappa_l^{(4)}
\]
holds, where \(\kappa_l^{(4)}=\E(Z_l^4)-3\Var(Z_l)^2\). Here the first term is zero because \(\Cov(X_c,e)=0\). If \(\varepsilon_1\) and \(\varepsilon_2\) are normal, their fourth cumulants vanish, and
\begin{equation}
\Cov(X_c^2,e^2)
=
\alpha^4\kappa^2\kappa_G^{(4)}+(\lambda-\kappa)^2\kappa_U^{(4)}.
\label{eq:app-omega11-cumulant-single}
\end{equation}
Consequently,
\begin{equation}
\omega_{11}
=
\omega_{11}^{\mathrm{cls}}
+
\frac{\alpha^4\kappa^2\kappa_G^{(4)}+(\lambda-\kappa)^2\kappa_U^{(4)}}{V_X^2}.
\label{eq:app-omega11-cumulant-decomp}
\end{equation}
When \(G\) and \(U\) satisfy Gaussian-type fourth-moment identities, \(\kappa_G^{(4)}=\kappa_U^{(4)}=0\), the robust and classical targets coincide. Otherwise they can differ substantially.

For the allele-frequency calculation used in Section~\ref{subsec:omega11_interpretation}, assume \(G\sim\mathrm{Bin}(2,p)\), \(U\) is normal, and set \(q=p(1-p)\). Then \(\Var(G)=2q\), \(\kappa_U^{(4)}=0\), and
\[
\kappa_G^{(4)}=\E\{(G-2p)^4\}-3\{2q\}^2=2q(1-6q).
\]
Also
\[
V_X=2\alpha^2q+\sigma_U^2+\sigma_{\varepsilon_1}^2,
\qquad
\omega_{11}^{\mathrm{cls}}
=
\frac{\tau^2}{V_X}-\frac{\lambda^2\sigma_U^4}{V_X^2},
\qquad
\tau^2=\lambda^2\sigma_U^2+\sigma_{\varepsilon_2}^2.
\]
Therefore
\begin{equation}
R_{\omega_{11}}(p)
=
\frac{\omega_{11}-\omega_{11}^{\mathrm{cls}}}{\omega_{11}^{\mathrm{cls}}}
=
\frac{2\alpha^4\lambda^2\sigma_U^4q(1-6q)}
{V_X^2\{\tau^2V_X-\lambda^2\sigma_U^4\}}.
\label{eq:app-omega11-relative-diff}
\end{equation}
The denominator is positive because
\[
\tau^2V_X-\lambda^2\sigma_U^4
=
\lambda^2\sigma_U^2(2\alpha^2q+\sigma_{\varepsilon_1}^2)
+\sigma_{\varepsilon_2}^2V_X>0.
\]
Thus the sign of \(R_{\omega_{11}}(p)\) is the sign of \(q(1-6q)\). On \(p\in(0,1/2]\), the only nondegenerate zero solves \(p(1-p)=1/6\), namely
\[
p_0=\frac{3-\sqrt3}{6}\approx0.2113.
\]
The robust variance is larger than the classical variance for \(0<p<p_0\), smaller for \(p_0<p\le1/2\), and equal at \(p=p_0\).

For independent multi-SNP instruments with \(G_j\sim\mathrm{Bin}(2,p)\), define
\[
A=\sum_{j=1}^K\alpha_j(G_j-\E G_j),
\qquad
S_2=\sum_{j=1}^K\alpha_j^2,
\qquad
S_4=\sum_{j=1}^K\alpha_j^4.
\]
Then \(V_X=2qS_2+\sigma_U^2+\sigma_{\varepsilon_1}^2\), and the fourth cumulant of \(A\) is
\[
\kappa_A^{(4)}=2q(1-6q)S_4.
\]
The analogue of \eqref{eq:app-omega11-cumulant-decomp} is
\[
\omega_{11}
=
\omega_{11}^{\mathrm{cls}}
+
\frac{\kappa^2\kappa_A^{(4)}+(\lambda-\kappa)^2\kappa_U^{(4)}}{V_X^2}.
\]
Thus the robust--classical discrepancy is not confined to a single-SNP setting, although it is attenuated when the first-stage signal is spread evenly across many variants rather than concentrated in a few large effects.

The implication for the proposed statistic is direct. If the classical target is used in the numerator correction, the remaining first-order covariance with the screening statistic is
\[
\omega_{12}-\frac{\omega_{12}}{\omega_{11}^{\mathrm{cls}}}\omega_{11}
=
\omega_{12}\left(1-\frac{\omega_{11}}{\omega_{11}^{\mathrm{cls}}}\right),
\]
which is generally nonzero. The robust influence-function target in \eqref{eq:app-omega11-robust} is therefore required for the decorrelation coefficient.

\section{Supplement to Subsection~\ref{subsec:omega22_interpretation}: Full Delta-Method Variance of the Single-SNP Wald Ratio}
\label{app:wald-variance}

This appendix supplements Subsection~\ref{subsec:omega22_interpretation}. It derives the full first-order variance of the single-SNP Wald ratio and explains why a denominator-fixed standard error targets a different object away from the null.

Write the marginal SNP--exposure and SNP--outcome slopes as
\[
\hat\beta_{GX}=\frac{S_{GX}}{S_{GG}},
\qquad
\hat\beta_{GY}=\frac{S_{GY}}{S_{GG}},
\qquad
\hat\beta_{MR}=\frac{\hat\beta_{GY}}{\hat\beta_{GX}}=\frac{S_{GY}}{S_{GX}}.
\]
At the population level, \(\beta_{GX}=\alpha\), \(\beta_{GY}=\gamma\alpha\), and \(\beta_{GY}/\beta_{GX}=\gamma\). For the ratio map \(f(a,b)=a/b\), the delta method gives
\begin{align}
\Var(\hat\beta_{MR})
&\approx
\frac{\Var(\hat\beta_{GY})}{\beta_{GX}^2}
+
\frac{\beta_{GY}^2}{\beta_{GX}^4}\Var(\hat\beta_{GX})
-2\frac{\beta_{GY}}{\beta_{GX}^3}\Cov(\hat\beta_{GY},\hat\beta_{GX}).
\label{eq:app-wald-three-term-delta}
\end{align}
This is the first-order three-term variance expansion for a ratio estimator.

Equivalently, applying the ratio expansion directly to \(S_{GY}/S_{GX}\) yields
\[
\sqrt n(\hat\beta_{MR}-\gamma)
=
\frac1{\sqrt n}\sum_{i=1}^n
\frac{G_i(Y_i-\gamma X_i)}{\Cov(G,X)}+o_p(1).
\]
Because \(Y-\gamma X=\lambda U+\varepsilon_2\) is independent of \(G\),
\[
\omega_{22}^{\Delta}
=
\frac{\Var\{G(Y-\gamma X)\}}{\Cov(G,X)^2}
=
\frac{\lambda^2\sigma_U^2+\sigma_{\varepsilon_2}^2}{\alpha^2\sigma_G^2}.
\]
This is the full first-order variance factor used in the joint asymptotic theory.

A denominator-fixed approximation keeps only the first term in \eqref{eq:app-wald-three-term-delta}. Since the residual from the marginal regression of \(Y\) on \(G\) is
\[
Y-\beta_{GY}G
=Y-\gamma\alpha G
=(\gamma+\lambda)U+\gamma\varepsilon_1+\varepsilon_2,
\]
that approximation targets
\[
\omega_{22}^{\mathrm{pkg}}(\gamma)
=
\frac{(\gamma+\lambda)^2\sigma_U^2+\gamma^2\sigma_{\varepsilon_1}^2+\sigma_{\varepsilon_2}^2}{\alpha^2\sigma_G^2}.
\]
Therefore
\[
\omega_{22}^{\mathrm{pkg}}(\gamma)-\omega_{22}^{\Delta}
=
\frac{2\gamma\lambda\sigma_U^2+
\gamma^2(\sigma_U^2+\sigma_{\varepsilon_1}^2)}{\alpha^2\sigma_G^2}.
\]
At \(\gamma=0\), the two targets coincide. Away from the null, the denominator-fixed target can be either larger or smaller depending on the sign and magnitude of \(\gamma\lambda\). This affects standardization and power; it does not change the numerator decorrelation direction \(r=\omega_{12}/\omega_{11}\).

\section{Supplement to Subsection~\ref{subsec:omega22_interpretation}: IVW Variance and Summary-Data Implementations}
\label{app:ivw-package-variance}

This appendix supplements the \(\omega_{22}\) discussion in Subsection~\ref{subsec:omega22_interpretation}. It distinguishes the individual-level first-order variance of the IVW point estimator from the variance formulas used in summary data software implementations.

For uncorrelated variants, the IVW point estimator can be written as
\[
\hat\beta_{IVW}
=
\frac{\sum_{j=1}^K\pi_{j,n}\hat\beta_{Xj}\hat\beta_{Yj}}
{\sum_{j=1}^K\pi_{j,n}\hat\beta_{Xj}^2}.
\]
Appendix~\ref{app:joint-asymptotics} shows that, under the individual-level structural model,
\[
\sqrt n(\hat\beta_{IVW}-\gamma)
=
\frac1{\sqrt n}\sum_{i=1}^nT_{G,i}(Y_i-\gamma X_i)+o_p(1),
\]
where
\[
T_G=\sum_{j=1}^Kq_jG_j,
\qquad
q_j=\frac{\pi_j\alpha_j/\sigma_{G_j}^2}{\sum_{\ell=1}^K\pi_\ell\alpha_\ell^2}.
\]
Hence the individual-level first-order variance factor is
\[
\omega_{22}^{\mathrm{th}}
=(\lambda^2\sigma_U^2+\sigma_{\varepsilon_2}^2)
\frac{\sum_{j=1}^K\pi_j^2\alpha_j^2/\sigma_{G_j}^2}
{\left(\sum_{j=1}^K\pi_j\alpha_j^2\right)^2}.
\]
The ``squared weights over squared sum'' structure comes from the joint influence function of the collection of marginal SNP regressions when all are computed from the same individual-level sample.

A fixed-effect summary-data IVW standard error is often based on the weighted-regression approximation
\[
\Var_{\mathrm{FE}}(\hat\beta_{IVW})
\approx
\left(\sum_{j=1}^K \frac{\hat\beta_{Xj}^2}{\se(\hat\beta_{Yj})^2}\right)^{-1}.
\]
To compare this with \(\omega_{22}^{\mathrm{th}}\), let
\[
s_j^2=\Var(Y-\gamma\alpha_jG_j),
\qquad
\se(\hat\beta_{Yj})^2\approx \frac{s_j^2}{n\sigma_{G_j}^2}.
\]
Then inverse-variance weights have population limits proportional to \(\pi_j=\sigma_{G_j}^2/s_j^2\). With \(a_j=\alpha_j^2\sigma_{G_j}^2\), the individual-level first-order variance factor becomes
\[
\omega_{22}^{\mathrm{th}}
=
\tau^2
\frac{\sum_{j=1}^K a_j/s_j^4}
{\left(\sum_{j=1}^K a_j/s_j^2\right)^2},
\qquad
\tau^2=\lambda^2\sigma_U^2+\sigma_{\varepsilon_2}^2,
\]
whereas the fixed-effect summary-data variance factor is
\[
\omega_{22}^{\mathrm{FE}}
=
\left(\sum_{j=1}^K a_j/s_j^2\right)^{-1}.
\]
Their ratio is
\[
\frac{\omega_{22}^{\mathrm{th}}}{\omega_{22}^{\mathrm{FE}}}
=
\tau^2
\frac{\sum_{j=1}^K a_j/s_j^4}
{\sum_{j=1}^K a_j/s_j^2}.
\]
Under \(\gamma=0\), \(s_j^2=\tau^2\) for all \(j\), and the two variance factors coincide. Under alternatives, \(s_j^2\) can vary with the sign and magnitude of \(\gamma\), so the fixed-effect summary-data standard error is not uniformly conservative or uniformly anti-conservative relative to the individual-level first-order target. This explains the sign-dependent power differences reported in Section~\ref{subsec:omega22_interpretation}. As with the single-SNP case, these differences concern standardization and power rather than the decorrelation coefficient.

\bibliographystyle{unsrt}
\bibliography{reference}

@article{DaveySmithEbrahim2003,
  author  = {Davey Smith, George and Ebrahim, Shah},
  title   = {{Mendelian randomization}: can genetic epidemiology contribute to understanding environmental determinants of disease?},
  journal = {International Journal of Epidemiology},
  year    = {2003},
  volume  = {32},
  number  = {1},
  pages   = {1--22},
  doi     = {10.1093/ije/dyg070}
}

@article{DaveySmithHemani2014,
  author  = {Davey Smith, George and Hemani, Gibran},
  title   = {Mendelian randomization: genetic anchors for causal inference in epidemiological studies},
  journal = {Human Molecular Genetics},
  year    = {2014},
  volume  = {23},
  number  = {R1},
  pages   = {R89--R98},
  doi     = {10.1093/hmg/ddu328}
}

@article{Burgess2013SummarizedData,
  author  = {Burgess, Stephen and Butterworth, Adam and Thompson, Simon G.},
  title   = {Mendelian randomization analysis with multiple genetic variants using summarized data},
  journal = {Genetic Epidemiology},
  year    = {2013},
  volume  = {37},
  number  = {7},
  pages   = {658--665},
  doi     = {10.1002/gepi.21758}
}

@article{Hemani2018MRBase,
  author  = {Hemani, Gibran and Zheng, Jie and Elsworth, Benjamin and Wade, Kaitlin H. and Haberland, Valeriia and Baird, Denis and Laurin, Charles and Burgess, Stephen and Bowden, Jack and Langdon, Ryan and Tan, Vanessa Y. and Yarmolinsky, James and Shihab, Hashem A. and Timpson, Nicholas J. and Evans, David M. and Relton, Caroline and Martin, Richard M. and Davey Smith, George and Gaunt, Tom R. and Haycock, Philip C.},
  title   = {The {MR-Base} platform supports systematic causal inference across the human phenome},
  journal = {eLife},
  year    = {2018},
  volume  = {7},
  pages   = {e34408},
  doi     = {10.7554/eLife.34408}
}

@article{Wensley2011CRP,
  author  = {{C Reactive Protein Coronary Heart Disease Genetics Collaboration}},
  title   = {Association between {C} reactive protein and coronary heart disease: mendelian randomisation analysis based on individual participant data},
  journal = {BMJ},
  year    = {2011},
  volume  = {342},
  pages   = {d548},
  doi     = {10.1136/bmj.d548}
}

@article{Voight2012HDL,
  author  = {Voight, Benjamin F. and Peloso, Gina M. and Orho-Melander, Marju and Frikke-Schmidt, Ruth and Barbalic, Maja and Jensen, Majken K. and Hindy, George and Hólm, Hilma and Ding, Eric L. and Johnson, Toby and others},
  title   = {Plasma {HDL} cholesterol and risk of myocardial infarction: a mendelian randomisation study},
  journal = {The Lancet},
  year    = {2012},
  volume  = {380},
  number  = {9841},
  pages   = {572--580},
  doi     = {10.1016/S0140-6736(12)60312-2}
}

@article{Holmes2015Lipids,
  author  = {Holmes, Michael V. and Asselbergs, Folkert W. and Palmer, Tom M. and Drenos, Fotios and Lanktree, Matthew B. and Nelson, Christopher P. and Dale, Caroline E. and Padmanabhan, Sandosh and Finan, Chris and Swerdlow, Daniel I. and others},
  title   = {Mendelian randomization of blood lipids for coronary heart disease},
  journal = {European Heart Journal},
  year    = {2015},
  volume  = {36},
  number  = {9},
  pages   = {539--550},
  doi     = {10.1093/eurheartj/eht571}
}

@article{Meng2019VitaminD,
  author  = {Meng, Xiaomeng and Li, Xia and Timofeeva, Maria N. and He, Yuan and Spiliopoulou, Athina and Wei, Wen-Qing and Gifford, Angela and Wu, Hao and Varley, Tony and Joshi, Peter K. and others},
  title   = {Phenome-wide Mendelian-randomization study of genetically determined vitamin {D} on multiple health outcomes using the {UK Biobank} study},
  journal = {International Journal of Epidemiology},
  year    = {2019},
  volume  = {48},
  number  = {5},
  pages   = {1425--1434},
  doi     = {10.1093/ije/dyz182}
}

@article{Zhu2018CommonDiseases,
  author  = {Zhu, Zhihong and Zheng, Zhili and Zhang, Futao and Wu, Yang and Trzaskowski, Maciej and Maier, Robert and Robinson, Matthew R. and McGrath, John J. and Visscher, Peter M. and Wray, Naomi R. and Yang, Jian},
  title   = {Causal associations between risk factors and common diseases inferred from {GWAS} summary data},
  journal = {Nature Communications},
  year    = {2018},
  volume  = {9},
  pages   = {224},
  doi     = {10.1038/s41467-017-02317-2}
}

@article{BurgessThompson2011WeakInstruments,
  author  = {Burgess, Stephen and Thompson, Simon G.},
  title   = {Avoiding bias from weak instruments in Mendelian randomization studies},
  journal = {International Journal of Epidemiology},
  year    = {2011},
  volume  = {40},
  number  = {3},
  pages   = {755--764},
  doi     = {10.1093/ije/dyr036}
}

@article{Bowden2015MREgger,
  author  = {Bowden, Jack and Davey Smith, George and Burgess, Stephen},
  title   = {Mendelian randomization with invalid instruments: effect estimation and bias detection through {Egger} regression},
  journal = {International Journal of Epidemiology},
  year    = {2015},
  volume  = {44},
  number  = {2},
  pages   = {512--525},
  doi     = {10.1093/ije/dyv080}
}

@inproceedings{Huber1967,
  author    = {Huber, Peter J.},
  title     = {The Behavior of Maximum Likelihood Estimates Under Nonstandard Conditions},
  booktitle = {Proceedings of the Fifth Berkeley Symposium on Mathematical Statistics and Probability},
  volume    = {1},
  pages     = {221--233},
  publisher = {University of California Press},
  year      = {1967}
}

@article{White1980,
  author  = {White, Halbert},
  title   = {A Heteroskedasticity-Consistent Covariance Matrix Estimator and a Direct Test for Heteroskedasticity},
  journal = {Econometrica},
  volume  = {48},
  number  = {4},
  pages   = {817--838},
  year    = {1980}
}

@article{Fieller1954,
  author  = {Fieller, E. C.},
  title   = {Some Problems in Interval Estimation},
  journal = {Journal of the Royal Statistical Society. Series B (Methodological)},
  volume  = {16},
  number  = {2},
  pages   = {175--185},
  year    = {1954}
}

@article{Marees2018GWAS,
  author  = {Marees, Andries T. and de Kluiver, Hilde and Stringer, Sven and Vorspan, Florence and Curis, Emmanuel and Marie-Claire, Cynthia and Derks, Eske M.},
  title   = {A Tutorial on Conducting Genome-Wide Association Studies: Quality Control and Statistical Analysis},
  journal = {International Journal of Methods in Psychiatric Research},
  volume  = {27},
  number  = {2},
  pages   = {e1608},
  year    = {2018},
  doi     = {10.1002/mpr.1608}
}

@article{Uffelmann2021GWAS,
  author  = {Uffelmann, Emil and Huang, Qin Qin and Munung, Nchangwi Syntia and de Vries, Jantina and Okada, Yukinori and Martin, Alicia R. and Martin, Hilary C. and Lappalainen, Tuuli and Posthuma, Danielle},
  title   = {Genome-Wide Association Studies},
  journal = {Nature Reviews Methods Primers},
  volume  = {1},
  pages   = {59},
  year    = {2021},
  doi     = {10.1038/s43586-021-00056-9}
}

@misc{TwoSampleMRPackage,
  author       = {{MRC Integrative Epidemiology Unit}},
  title        = {{TwoSampleMR}: Two Sample Mendelian Randomization Functions and Interface to {MR-Base}},
  howpublished = {R package documentation and source code},
  year         = {2026},
  url          = {https://mrcieu.github.io/TwoSampleMR/},
  note         = {Accessed 2026-06-29}
}

@article{Freedman1983Screening,
  author  = {Freedman, David A.},
  title   = {A Note on Screening Regression Equations},
  journal = {The American Statistician},
  volume  = {37},
  number  = {2},
  pages   = {152--155},
  year    = {1983},
  doi     = {10.1080/00031305.1983.10482729}
}

@misc{FithianSunTaylor2014,
  author       = {Fithian, William and Sun, Dennis and Taylor, Jonathan},
  title        = {Optimal Inference after Model Selection},
  year         = {2014},
  eprint       = {1410.2597},
  archivePrefix = {arXiv},
  primaryClass = {math.ST},
  doi          = {10.48550/arXiv.1410.2597}
}

@article{TaylorTibshirani2015,
  author  = {Taylor, Jonathan and Tibshirani, Robert J.},
  title   = {Statistical Learning and Selective Inference},
  journal = {Proceedings of the National Academy of Sciences of the United States of America},
  volume  = {112},
  number  = {25},
  pages   = {7629--7634},
  year    = {2015},
  doi     = {10.1073/pnas.1507583112}
}

@article{LeeSunSunTaylor2016,
  author  = {Lee, Jason D. and Sun, Dennis L. and Sun, Yuekai and Taylor, Jonathan E.},
  title   = {Exact Post-Selection Inference, with Application to the Lasso},
  journal = {The Annals of Statistics},
  volume  = {44},
  number  = {3},
  pages   = {907--927},
  year    = {2016},
  doi     = {10.1214/15-AOS1371}
}

@book{Serfling1980Approximation,
  author    = {Serfling, Robert J.},
  title     = {Approximation Theorems of Mathematical Statistics},
  publisher = {John Wiley \& Sons},
  address   = {New York},
  year      = {1980},
  doi       = {10.1002/9780470316481}
}

@book{VanDerVaart1998,
  author    = {van der Vaart, Aad W.},
  title     = {Asymptotic Statistics},
  publisher = {Cambridge University Press},
  address   = {Cambridge},
  year      = {1998},
  doi       = {10.1017/CBO9780511802256}
}

@incollection{NeweyMcFadden1994,
  author    = {Newey, Whitney K. and McFadden, Daniel},
  title     = {Large Sample Estimation and Hypothesis Testing},
  booktitle = {Handbook of Econometrics},
  editor    = {Engle, Robert F. and McFadden, Daniel L.},
  volume    = {4},
  pages     = {2111--2245},
  publisher = {Elsevier},
  address   = {Amsterdam},
  year      = {1994},
  doi       = {10.1016/S1573-4412(05)80005-4}
}

@article{Hampel1974Influence,
  author  = {Hampel, Frank R.},
  title   = {The Influence Curve and Its Role in Robust Estimation},
  journal = {Journal of the American Statistical Association},
  volume  = {69},
  number  = {346},
  pages   = {383--393},
  year    = {1974},
  doi     = {10.1080/01621459.1974.10482962}
}

@book{Cochran1977,
  author    = {Cochran, William G.},
  title     = {Sampling Techniques},
  edition   = {3},
  publisher = {John Wiley \& Sons},
  address   = {New York},
  year      = {1977}
}

@article{Oehlert1992,
  author  = {Oehlert, Gary W.},
  title   = {A Note on the Delta Method},
  journal = {The American Statistician},
  volume  = {46},
  number  = {1},
  pages   = {27--29},
  year    = {1992},
  doi     = {10.1080/00031305.1992.10475842}
}

@manual{TwoSampleMRWaldRatio,
  author       = {{MRC Integrative Epidemiology Unit}},
  title        = {{mr\_wald\_ratio}: Perform 2 Sample IV Using Wald Ratio},
  organization = {{TwoSampleMR} package documentation},
  year         = {2026},
  url          = {https://mrcieu.github.io/TwoSampleMR/reference/mr_wald_ratio.html},
  note         = {Accessed 2026-07-02}
}

\end{document}